\newtheorem{definition}{Definition}
\newtheorem{proposition}{Proposition}
\newtheorem{proof}{Proof}
\title{The GLD-plot: A depth-based plot to investigate unimodality of directional data}
\author{ 
%\href{https://orcid.org/0000-0003-3668-0389}{\includegraphics[scale=0.06]{}\hspace{1mm}
Giuseppe~Pandolfo \\
%} \\
	Department of Industrial Engineering\\
	University of Naples Federico II\\
	Naples, Italy\\
	\texttt{giuseppe.pandolfo@unina.it} 
}
\date{}
\begin{document}
\maketitle

\begin{abstract}
A graphical tool for investigating unimodality of hyperspherical data is proposed. It is based on the notion of statistical data depth function for directional data which extends the univariate concept of rank. Firstly a local version of distance-based depths for directional data based on aims at analyzing the local structure of hyperspherical data is proposed. Then such notion is compared to the global version of data depth by means of a two-dimensional scatterplot, i.e. the GLD-plot. The proposal is illustrated on simulated and real data examples.
\end{abstract}
\keywords{Angular variables \and spherical distance \and Unimodality \and depth functions}

\section{Introduction}
\label{intro}

Multivariate analysis plays a fundamental role in statistics, and nowadays most statistical experiments are multivariate by nature, and large scale multivariate data sets are now tractable thanks to the advances in technology. However, classical multivariate analysis relies on some distributional assumptions that are often difficult to justify in real life applications.

This work aims at introducing a general nonparametric graphical tool based on the concept of data depth functions for directional data. The proposed methodology provides a nonparametric approach for graphically detecting multivariate distributional unimodality on hyperspheres. Specifically, information about unimodality of hyperspherical distributions are obtained through data depths, and they can be displayed and visualized in a simple two-dimensional plot. Such graph is based on an analysis of the rankings derived from a data depth function and its local counterpart, so that they can offer an easy interpretable picture of the distributions. Indeed, the proposed depth-based tool can be interpreted as a multivariate generalization of standard univariate rank methods with the important difference that the ranking in the univariate case is linear (going from the smallest to the largest), while the multivariate ranking considered here is a \textit{center-outward ranking} induced by depth functions.

The notion of centrality also plays an important role in the domain of graph analytics, and the use of depth-induced rankings to investigate distributional features has been already used for analyzing data in $\mathbb{R}^{q}$ by means of graphical tools. \cite{liu1999multivariate} proposed the ``sunburst plot'' as a bivariate generalization of the box-plot and the DD-(depth versus depth) plots. In the same year, \cite{rousseeuw1999bagplot}, proposed the bagplot, a bivariate generalization of the univariate boxplot by exploiting the notion of halfspace location depth. Later on, \cite{li2012dd} used the DD-plot to perform  classification of data in $\mathbb{R}^{q}$. A nonparametric classification procedure based on the DD-plot was introduced also by \cite{lange2014fast}, while a clustering algorithm based on it was proposed by \cite{singh2016nonparametric}. The concept of data depth was also used to build control charts for monitoring processes of multivariate quality measurement \citep[see, e.g.][]{liu1995control,bae2016data}. 

However, despite the increasing interest for the analysis of data in $\mathbb{R}^{q}$, the adoption of depth-based visualization techniques for the analysis of directional data has been neglected so far, except for a recent work about the classification of data on the unit circle through the DD-plot \citep{pandolfo2018note}. 
  
The proposed method is motivated by means of simulated and real data examples. The remainder of the article is organized as follows. In Section \ref{sec:background} some background about directional data and the concept of data depth function is presented. In Section \ref{sec:locdep} the definition of a local notion of distance-based depth for directional distribution is proposed. A depth-based exploratory graphical tool for investigating hyperspherical unimodality is introduced in Section \ref{sec:Applications}. Two real data examples are offered in Section \ref{sec:RealDataExample}, while some final remarks are reported in Section \ref{sec:conc}.

\section{Background: Directional data and data depth concept}
\label{sec:background}

A directional data sample is a collection of $n$ observations lying on the surface of the unit $\left(q-1\right)-$dimensional hypersphere $S^{q-1}:=\left\{x \in \mathbb{R}^{q}:x'x=1\right\}$ of $\mathbb{R}^{q}$, for $q \geq 2$. Each observation is recorded as a direction of a unit vector in $\mathbb{R}^{q}$. Such data arise in many scientific fields such as geology, meteorology and biology, just to name a few.
Many applications can be found for $q = 2$ (circular data) or $q = 3$ (spherical data), but also in higher dimensions, e.g. in gene-expression analysis \citep[see ][]{banerjee2004}, text mining \citep[see ][]{banerjee2005,hornik2014}, or pattern recognition \citep[see][]{wilson2014}.

Data depth function is an important nonparametric tool for the analysis of complex data such as functional and directional data. It provides a center-outward ordering of the data and leads to a ranking of data which can be exploited for describing different features of the data distribution.  

Consider a bounded spherical distance $d(\cdot, \cdot)$ whose upper bound between any two points on $S^{q-1}$ is denoted as $d^{\sup}:=\sup \{d(x, y): x,y\in S^{q-1}\}$. Then, let $x$ be a point on $S^{q-1}$ and $F$ a distribution on $S^{q-1}$.
\begin{definition}[Directional distance-based depths] 
\label{defclass}
A directional distance-based depth of~$x \in S^{q-1}$ with respect to $F$ is defined as
\begin{equation}
\label{eq:class}
D_{d}\left(x; F\right) := d^{\sup} - E\left[d\left(x; X_{1}, \ldots, X_{n}\right)\right],	
\end{equation}
where~$X_{1}, \ldots, X_{n}$ is a random sample from $F$.
\end{definition}

According to \cite{pandolfo2018distance}, a distance-based statistical depth function for directional data should satisfy the four desirable properties listed below. 
\begin{itemize}
	\item[\textbf{P1.}] \textbf{Rotational invariance:} $D_{d}\left(x,F\right) = D_{d}\left(Ox,OF\right)$ for any $q \times q$ orthogonal matrix $O$;
	\item[\textbf{P2.}] \textbf{Maximality at center:} $\underset{x \in S^{q-1}}{\sup}D_{sph}\left(x^{*},F\right)$ for any distribution $F$ with center at $x^{*}$;
\item[\textbf{P3.}] \textbf{Monotonicity relative to the deepest point:} $D_{d}\left(x,F\right)$ decreases along any geodesic path $t \mapsto x_{t}$ from the deepest point to its antipodal point $-x^{*}$, i.e. \linebreak $D_{d}\left(x\left(t_{1},F\right)\right) \leq D_{d}\left(x\left(t_{2},F\right)\right)$ for $\left(t_{1},F\right) > \left(t_{2},F\right)$;
\item[\textbf{P4.}] \textbf{Minimality at the antipode to the center:} $D_{d}\left(-x^{*},F\right) = \underset{x \in S^{q-1}}{\inf}{d}\left(x^{*},F\right)$ for any $F$ with center at $x^{*}$.
\end{itemize}
P2 and P3 are the extension of properties 2 and 3 of depth functions for data in $\mathbb{R}^{q}$ as defined in \cite{zuo2000}. Conversely, P1 and P4 characterize a distance-based depth for directional data. 

Depths are known to be monotonically decreasing from the deepest point (i.e. the point at which they are maximized), and recognize just one center regardless of the number of the underlying modes. Consequently, while they are well suited for unimodal distributions, they are not able to capture the multimodality of a distribution. Hence, depths substantially fail here in describing the structure of the data. 

For this reason, over the last few years, some definitions of local depth functions were introduced with the purpose of revealing local characteristics of the data, and have already found some interesting applications. The idea which lies behind this approach is to compute the depth of a point with respect to a bounded neighborhood of it. As a consequence, local depth functions admit multiple maxima that correspond to local centers. The first definition of local depth was introduced by \cite{agostinelli2011} who exploited the local space geometry to provide the notions of local simplicial and halfspace depths. Their definition relies on a tuning parameter to define a constant size neighborhood of each point. However, the main drawback of such notions of depth (which are based on geometrical structures) regards their high computational cost, especially when dimensions and sample size increase. Later, \cite{paindaveine2013} proposed a method aimed at converting global depths into local ones by conditioning the distribution to appropriate depth-based neighborhoods. 

Hence, the depth problem can be dealt with a \textit{global} or a \textit{local} approach also in directional data analysis allowing for the possibility of getting different insights into specific scenarios. However, while a vast body of literature exists on local depths for multivariate and functional data, comparatively only the definition of local simplicial depth for (hyper)spherical data have been proposed \citep[see ][]{agostinelli2012}. 

\section{Local distance-based depths}
\label{sec:locdep}

In this section a general definition of local depths is proposed by generalizing the definition of distance-based depth functions for directional data introduced by Pandolfo et al. \cite{pandolfo2018distance}. 
The proposed local depth notion is based on the idea of getting insights into the shape of a distribution $F$ by restricting a global distance-based depth measure to a neighborhood of each point on the spherical space. Hence, for a given point $\mathit{x}$ on $S^{q-1}$, all the points within a certain distance $\mathit{\delta}$ to $\mathit{x}$ are considered, i.e. the set of points which are within a certain degree of \textit{closeness} to $\mathit{x}$ which defines a spherical cap ``centered'' at $\mathit{x}$. 

\begin{definition}[Local distance-based depths for directional data] Consider any \linebreak bounded spherical distance function $d$ and a distribution $F$ on $S^{q-1}$. Then, for any positive value $0 < \mathit{\delta} < d^{\sup}$, the local directional distance-based depth of a point $\mathit{x} \in S^{q-1}$ with respect to $F$ is defined as 
$$
LD^{\mathit{\delta}}_{d}\left(x; F\right) := d^{\sup} - E_{F}\left[d\left(x; X_{1}, \ldots, X_{n}\right)|d\left(x; X_{1}, \ldots, X_{n}\right) \leq \delta \right],
$$ 
where $X_{1}, \ldots, X_{n}$ is a random sample from $F$.
\label{deflocal}
\end{definition}

The parameter $\delta$ determines how central the point $x$ is within the space conditional to a given window around $x$. 
The value of the local depth function is dependent on the properties of the distribution $F$ within the neighborhood of $\mathit{x}$ defined by the parameter $\mathit{\delta}$. Specifically, in the case of (hyper)spherical data, the distribution of the distance within the considered neighborhood can help to provide information on the data structure. 

The value of the parameter $\mathit{\delta}$ defines to what extent the depth function is local. When it approaches the supremum of the chosen spherical distance, the local depth approaches the global one, and a unique center will be defined. As $\mathit{\delta}$ gets smaller, local depth allows for the detection of local points of maximum and minimum.
Hence the choice of the parameter $\mathit{\delta}$ crucially depends on the application at hand and it does not exist a best strategy in general. Thus, it is advisable, in every specific application, to define appropriate strategies.  
However, one note of caution is due. Directional data on $S^{q-1}$ wrap around since they have finite length in space, and this must be taken into account when defining the value of $\mathit{\delta}$.

Note that when the distribution is continuous and unimodal, local and global behave very similarly and provide similar rankings. 

Since $E_{F}\left[d\left(x; X_{1}, \ldots, X_{n}\right)|d\left(x; X_{1}, \ldots, X_{n}\right) \leq \mathit{\delta} \right]$ is increasing on $\left[0, d^{\sup}\right)$ as $\mathit{\delta}$ increases, when $\mathit{\delta}$ becomes smaller the local depth values become larger, i.e.
$$
LD^{\mathit{\delta_{1}}}_{d}(x; F) \geq LD^{\mathit{\delta}_{2}}_{d}(x;F) \quad \text{for} \quad  0 < \mathit{\delta}_{1} < \mathit{\delta}_{2}.
$$

\noindent In addition, it is trivially satisfied that $\underset{\delta \rightarrow 0^{+}}{\lim} LD^{\mathit{\delta}}_{d}(x;F) = d^{\sup}$ which implies 

\begin{equation*}
0 \leq LD^{\mathit{delta}}_{d}\left(x; F\right) - D_{d}\left(x, F\right) < d^{\sup} - D_{d}\left(x, F\right)
\label{rem1}
\end{equation*}	

The relationship between the local and the global distance-based depth function is explained in the following proposition.
\begin{proposition}
\label{prop2}
\noindent Let $F$ be an absolutely continuous distribution on $S^{q-1}$. Then, for a point $x \in S^{q-1}$ and $0 < \delta < d^{\sup}$ we have
	$$D_{d}\left(x; F\right) = LD^{\mathit{\delta}}_{d}\left(x; F\right) - \left\{E_{F}\left[d\left(x; F\right)\right] - E_{F}\left[d\left(x; F\right)|d\left(x; F \right) \leq \delta \right]\right\}.$$
\end{proposition}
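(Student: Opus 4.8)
The plan is to prove the identity by direct substitution of the two defining formulas, since the claim is ultimately an algebraic rearrangement rather than a statement requiring analytic machinery. First I would write out the right-hand side explicitly, replacing $LD^{\delta}_{d}(x;F)$ by its definition from Definition \ref{deflocal}, namely $d^{\sup} - E_{F}[d(x;F)\mid d(x;F)\leq\delta]$. Carrying out this substitution gives
\[
LD^{\delta}_{d}(x;F) - \bigl\{E_{F}[d(x;F)] - E_{F}[d(x;F)\mid d(x;F)\leq\delta]\bigr\} = d^{\sup} - E_{F}[d(x;F)\mid d(x;F)\leq\delta] - E_{F}[d(x;F)] + E_{F}[d(x;F)\mid d(x;F)\leq\delta].
\]

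The crucial observation is that the conditional expectation $E_{F}[d(x;F)\mid d(x;F)\leq\delta]$ now appears twice with opposite signs: once from expanding the local depth and once from the bracketed correction term. These two occurrences cancel identically, so the right-hand side collapses to $d^{\sup} - E_{F}[d(x;F)]$, which is precisely the definition of the global depth $D_{d}(x;F)$ given in Definition \ref{defclass}. This establishes the claimed equality, and I would close the argument here.

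There is no genuine obstacle in this proof; the only point deserving care, rather than an estimate, is ensuring that each quantity appearing in the identity is well-defined. The absolute continuity of $F$ guarantees that for $0 < \delta < d^{\sup}$ the conditioning event $\{d(x;F)\leq\delta\}$ carries positive probability, so the conditional expectation exists and is finite, while the boundedness of the distance $d$ ensures that the unconditional expectation is finite as well. Granting these standard regularity facts, the result follows purely from the sign cancellation described above, with no further computation required.
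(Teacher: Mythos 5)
Your proof is correct: the paper itself omits the argument as ``straightforward,'' and your substitution of Definition \ref{deflocal} into the right-hand side followed by cancellation of the two conditional-expectation terms is precisely that straightforward computation, reducing the expression to $d^{\sup} - E_{F}[d(x;F)] = D_{d}(x;F)$. Your added remark on well-definedness of the conditional expectation is a sensible (if minor) point of care that the paper does not spell out either.
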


\begin{proof}
The proof is straightforward and thus omitted. 
\end{proof}

\begin{proposition}
\label{thm1}
$LD_{d}\left(x, F\right)$ is rotational invariant using any rotational invariant spherical distance function $d$ on $S^{q-1}$.
\end{proposition}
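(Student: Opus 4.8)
The plan is to reduce the claim to the single defining property of a rotationally invariant spherical distance, namely that $d(Ox, Oy) = d(x,y)$ for every $q \times q$ orthogonal matrix $O$ and all $x, y \in S^{q-1}$. First I would record two immediate consequences of this identity. The aggregate distance appearing in Definition~\ref{deflocal} inherits the invariance, so that $d(Ox; OX_1, \ldots, OX_n) = d(x; X_1, \ldots, X_n)$; and the uniform bound $d^{\sup} = \sup\{d(x,y): x,y \in S^{q-1}\}$ is left unchanged by $O$, because $O$ is a bijection of $S^{q-1}$ onto itself and therefore preserves the set over which the supremum is taken.

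Next I would introduce the pushforward law $OF$, i.e.\ the distribution of $OX$ when $X \sim F$, and use the change of variables $Y_i = OX_i$. If $X_1, \ldots, X_n$ is a random sample from $F$, then $OX_1, \ldots, OX_n$ is a random sample from $OF$, so any expectation taken under $OF$ with argument $Ox$ can be rewritten as an expectation under $F$. The key observation is that the conditioning event is carried into itself: by the invariance recorded above,
$$
\{d(Ox; Y_1, \ldots, Y_n) \leq \delta\} = \{d(Ox; OX_1, \ldots, OX_n) \leq \delta\} = \{d(x; X_1, \ldots, X_n) \leq \delta\}.
$$
Since both the conditioning set and the integrand are preserved, the conditional expectation is unchanged, and subtracting it from the invariant constant $d^{\sup}$ yields $LD^{\delta}_d(Ox, OF) = LD^{\delta}_d(x, F)$, which is the assertion.

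The only delicate point, and the step I would treat most carefully, is the handling of the conditional expectation under the change of variables: one must verify that conditioning on $\{d(Ox; Y_1,\ldots,Y_n) \le \delta\}$ under $OF$ corresponds exactly to conditioning on $\{d(x; X_1,\ldots,X_n) \le \delta\}$ under $F$, so that the conditioning event and the quantity being averaged map consistently. Absolute continuity of $F$ (equivalently of $OF$) guarantees that the conditioning event has positive probability for $0 < \delta < d^{\sup}$, so the conditional expectation is well defined and the substitution is legitimate; the remaining manipulations are then routine consequences of the invariance identity.
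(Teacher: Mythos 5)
Your argument is correct and follows the same route as the paper, which simply asserts that the claim reduces to the defining identity $d(Ox,Oy)=d(x,y)$ for orthogonal $O$. Your write-up fills in the details the paper omits (invariance of $d^{\sup}$, the pushforward $OF$, and the preservation of the conditioning event), but the underlying idea is identical.
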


\begin{proof}
Clearly the result follows if $d$ is rotational invariant of the form $d\left(Ox, Oy\right) = d\left(x, y\right)$ for any two points $x$ and $y$ on $S^{q-1}$, and for any $d \times d$ orthogonal matrix $O$ for any subset $\mathcal{N} \subset S^{q-1}$.
\end{proof}

As pointed out by \cite{agostinelli2011}, local depth functions cannot be strictly considered depth functions (this holds true for data in $\mathbb{R}^{d}$ and on $S^{q-1}$ as well). This is because they may not possess the monotonicity property given that they are aimed at searching for more sensitivity to local features of the distribution.

For the purpose of this work three notions of distance-based depths for directional data, namely the arc distance depth ($D_{\rm arc}$), the cosine distance depth ($D_{\cos}$) and the chord distance depth ($D_{\rm chord}$) are considered. They are based on the following distance measures:
\begin{itemize}
\item the \emph{arc length distance} $d_{\rm arc}$, associated with $\eta(t)=\eta_{\rm arc}\left(t\right) = \arccos\left(t\right)$;
\item the \emph{cosine distance} $d_{\cos}$, associated with $\eta\left(t\right) = \eta_{\cos}\left(t\right)=1-t$;
\item \sloppy the \emph{chord distance} $d_{\rm chord}$, defined as $d_{\rm chord}(x,y)=\|x-y\|=\sqrt{2(1-x'y)} =: \delta_{\rm chord}(x'y)$.
\end{itemize}
Note that $0 \leq d_{\rm arc} \leq \pi$, $0 \leq d_{\cos} \leq 2$ and $0 \leq d_{\rm chord} \leq 2$.\\

The local version of depths based on the above listed distances can be derived according to Definition \ref{deflocal}. Henceforth they will be denoted as $LD_{\rm arc}$, $LD_{\cos}$ and $LD_{\rm chord}$. To illustrate how they work in capturing the local features of a reference distribution, their empirical behavior is investigated by means of simulated data sets. Specifically, the focus is put on the question of whether the shapes of the local distance-based depth functions reflect the characteristics of the underlying distribution.

For this purpose the von Mises-Fisher distribution, usually denoted by $vMF\left(\mu, \kappa\right)$, is considered. This is the directional analog of the Gaussian distribution in $\mathbb{R}^{q}$ and has probability density function given by
\begin{equation}
f\left(x,\mu,\kappa\right) = \frac{\kappa^{q/2-1}}{2\pi^{q/2} I_{q/2-1}\left(\kappa\right)}\exp\left\{\kappa\cos\left(x' \mu\right)\right\},
\label{eq:vMF}
\end{equation}
where $I_{v}\left(\kappa\right)$ is the modified Bessel function of first kind and order $v$. The density is parametrized by the mean direction $\mu$ and the concentration parameter $\kappa$, that measures how strongly data are concentrated around the mean direction $\mu$ (larger values indicate stronger concentration of unit vectors around $\mu$).

Without loss of generality, for the sake of illustration let us considering only simulated data restricted to the circular case ($q = 2$). Firstly, two samples drawn from a unimodal von Mises-Fisher distribution and an equally-weighted mixture of von Mises-Fisher distributions on $S^{1}$ are considered, that is $F_{1} = vMF\left(\pi, 2\right)$ and $F_{2} = \frac{1}{2}vMF\left(\pi, 0.5\right) + \frac{1}{2}vMF\left(3\pi/2, 5\right)$, respectively. 

Two additional scenarios were considered, that is a bimodal distribution with the two modes which are $7/12\pi$ ($105^{\circ}$) far from each other and on the more interesting case of symmetric bimodal antipodal distribution, that is the underling density $f$ is such that $f\left(x\right) = f\left(-x\right)$ for any $x \in S^{q-1}$, where global depths usually are constant and thus any provided order appears to be unrealistic (see \citealp{pandolfo2018distance}, \citealp{liu1992} and \citealp{ley2014}).

The parameter $\mathit{\delta}$ was held constant all over the space for all cases for all the considered depth functions, i.e. $\mathit{\delta}_{\rm arc} = \pi/2$, $\mathit{\delta}_{\cos} = 1$ and $\mathit{\delta}_{\rm chord} = \sqrt{2}$. This is to consider, for the computation of each depth, a neighborhood of a given point $\mathit{x}$ such that the angle between $\mathit{x}$ and any other point on the boundary of the neighborhood was equal to $90^{\circ}$. 
The plots of the corresponding local distance-based depth functions $LD_{\rm arc}$, $LD_{\cos}$ and $LD_{\rm chord}$, along with the density curve of the considered distributions are reported in Figures \ref{fig:fig1} and \ref{fig:fig2}. In the case of unimodal distribution the local depths reflect the distributional pattern. When the distribution is asymmetric (generated according to a mixture of two von Mises-Fisher distributions), local depths appear to be able to detect the mode and the bump around the minor mode of the distribution although it is less visible for the $LD_{\cos}$. For the bimodal distributions, local depth functions clearly show two local maxima located about the two modes of the distribution.
\begin{figure}[h!]
  \begin{minipage}[b]{0.5\linewidth}
    \centering
		\includegraphics[width=1\textwidth]{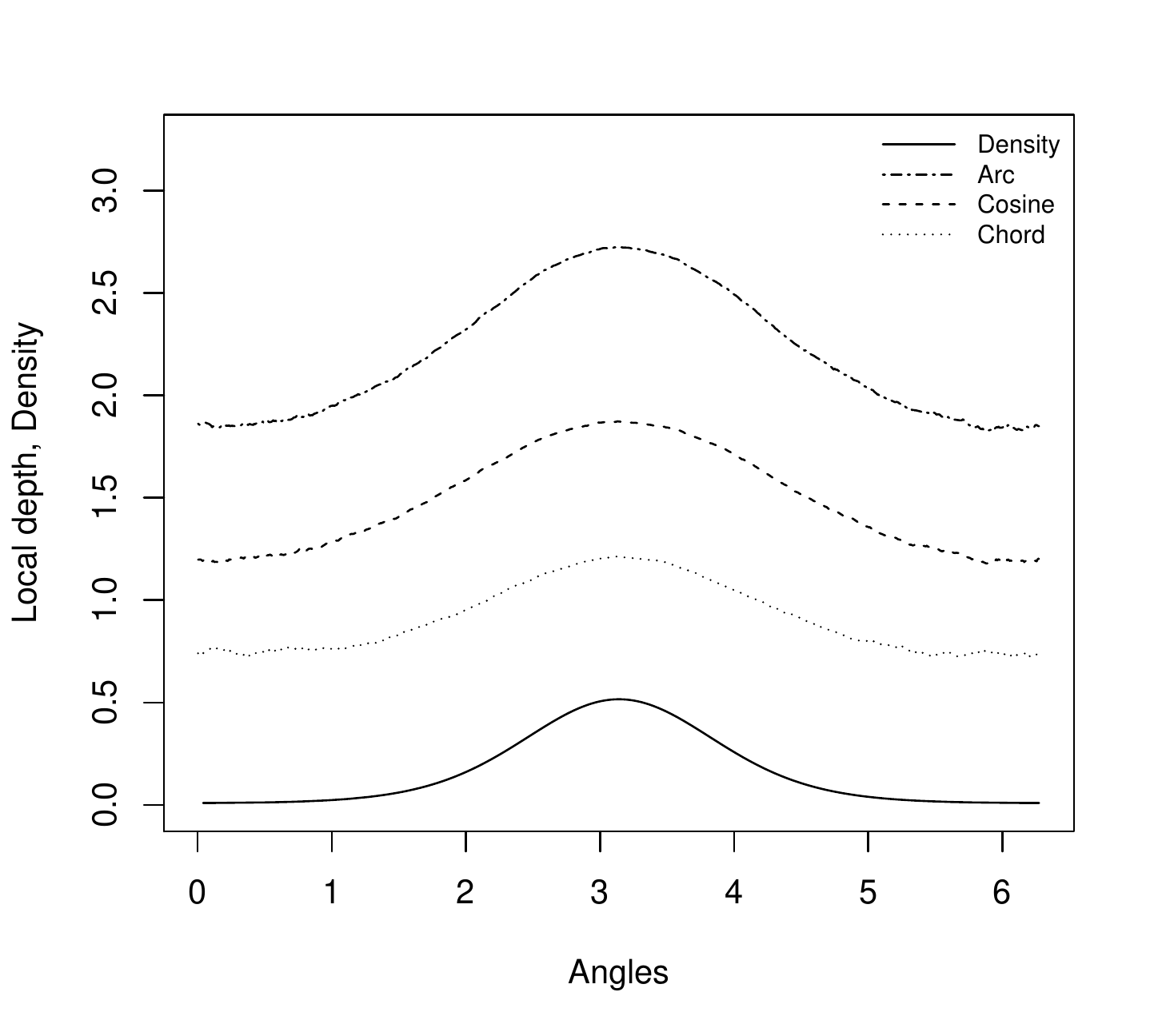}
  \end{minipage}
  \hspace{0.15cm}
  \begin{minipage}[b]{0.5\linewidth}
    \centering
		\includegraphics[width=1\textwidth]{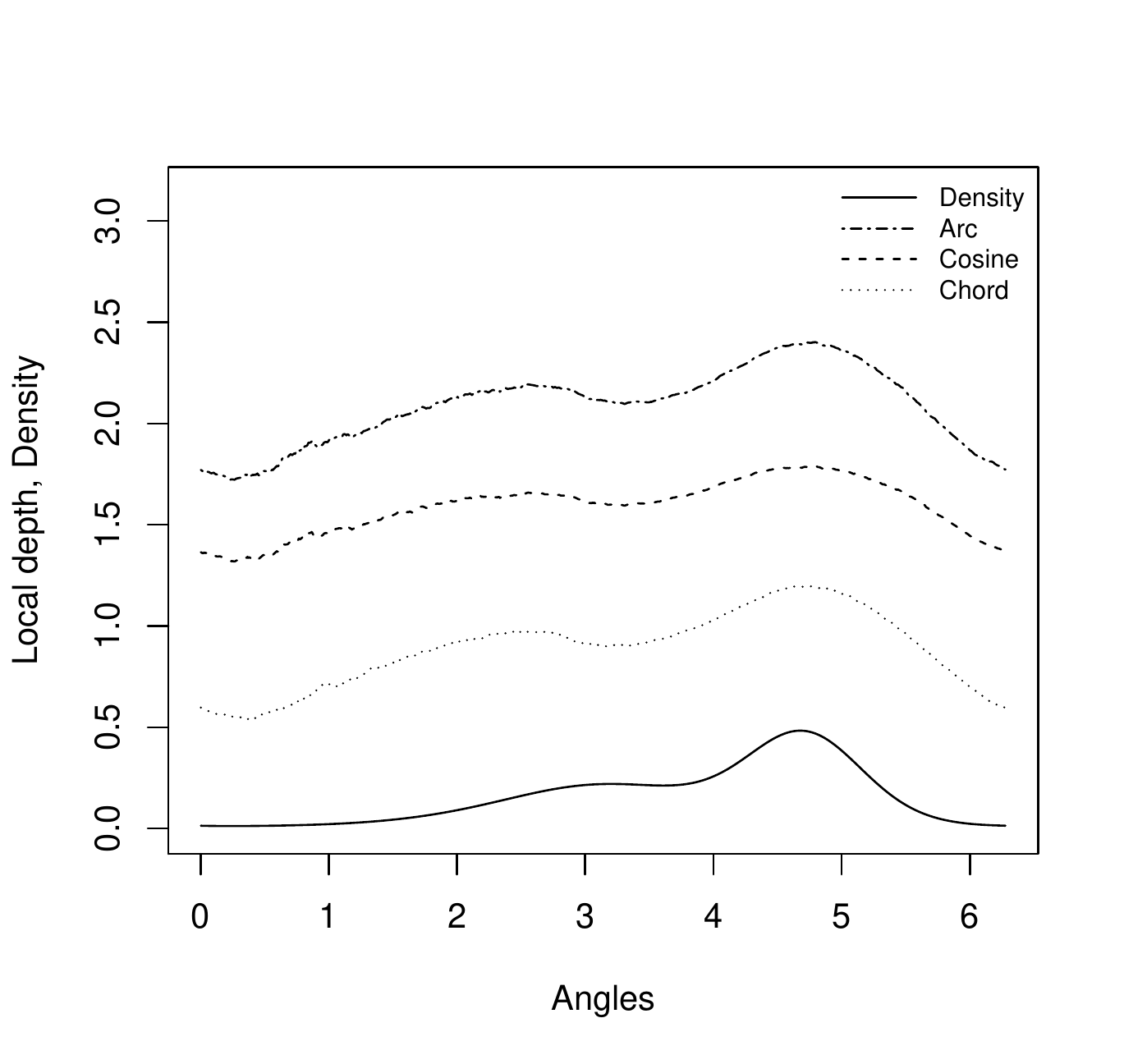}
  \end{minipage}
	    \caption{Plots of local arc, cosine and chord distance depth functions for a circular uniform (left panel) and mixture of von Mises-Fisher distributions (right panel), along with the corresponding density curve.}
			    \label{fig:fig1}
\end{figure}

\begin{figure}[h!]
  \begin{minipage}[b]{0.5\linewidth}
    \centering
		\includegraphics[width=1\textwidth]{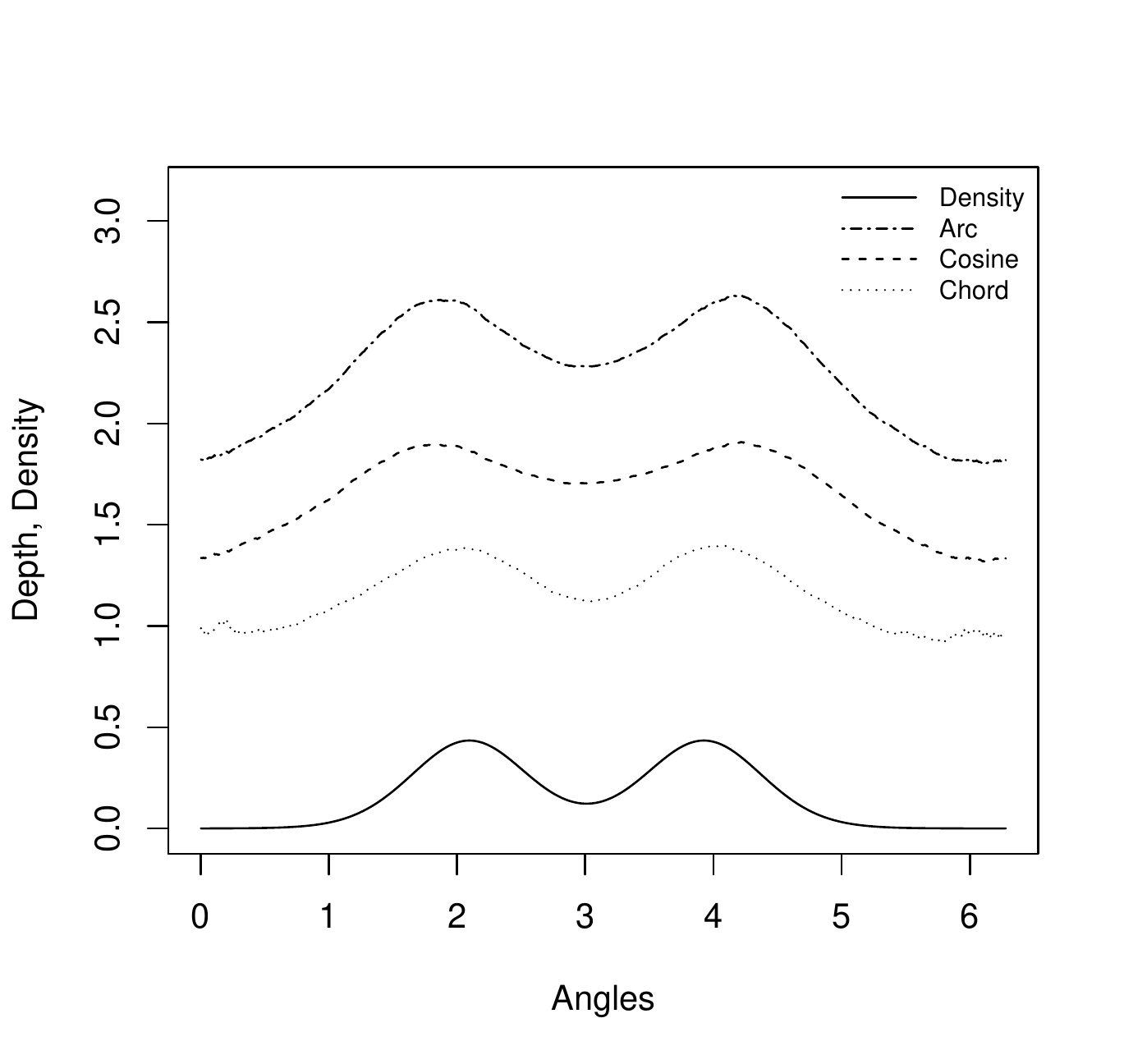}
  \end{minipage}
  \hspace{0.15cm}
  \begin{minipage}[b]{0.5\linewidth}
    \centering
		\includegraphics[width=1\textwidth]{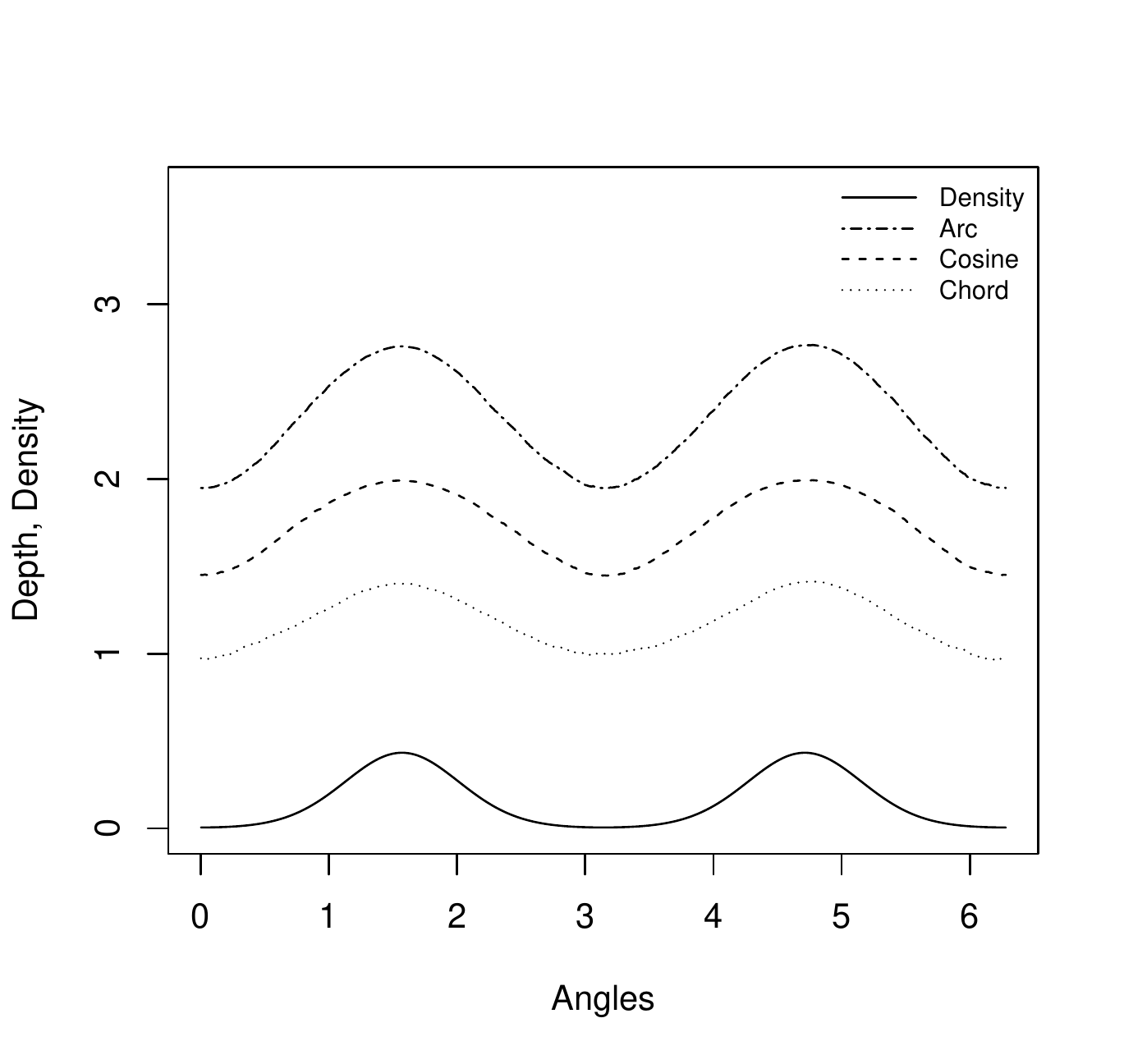}
  \end{minipage}
	    \caption{Plots of local arc, cosine and chord distance depth functions for a bimodal circular distribution with modes $\frac{7}{12}\pi$ ($105^{\circ}$) far from each other (left panel) and an antipodal distribution (left panel), along with the corresponding density curve.}
			    \label{fig:fig2}
\end{figure}

\section{An exploratory graphical tool for investigating hyperspherical unimodality}%: The Global-Local Depth-plot}
\label{sec:Applications}

In this section, a depth-based tool for the graphical exploratory analysis of data on hyperspheres is introduced. Specifically, it aims at exploring hyperspherical unimodality.

\subsection{The Global-Local Depth plot}
\label{sec:exploratory}

Here, by combining the properties of the global and local depth functions, a straightforward plot to investigate unimodality of directional data is proposed. The main idea relies on the fact that if a distribution is unimodal, its local depth function approaches the corresponding global depth. Conversely, when a distribution is not unimodal, the rankings provided by the two functions differ. The farther the data are from unimodality, the larger the difference between the global and local rankings. In practice, the global and the local depth values of a point can be visually compared by means of a two-dimensional scatterplot where the $x$-coordinates are the global depth of the corresponding data point and the $y$-coordinates are the local depth of the corresponding data point. To simplify the comparison, the values of the sample depths are normalized by scaling them between 0 and 1. For this reason, the plot is called Global-Local Depth plot, GLD-plot hereafter. If the distribution is unimodal and thus the set of the deeper local points does not substantially differ from the corresponding set of the deeper global depth points, the plot will exhibit a concentration on the upper-right corner. In case of strong unimodality, the ranks of the two depth functions will coincide, and points on the plot will roughly form a straight diagonal line. On the other hand, departure from unimodality will show different scenarios, obviously depending on the kind of departure.
To facilitate the user to interpret the visual result of a GLD-plot, two dashed lines are added so that the plot is partitioned in four sections. The lines select the $50\%$ deepest sample points according to the global and local functions, respectively. Roughly, if points in the GLD-plot substantially lie within the upper-right and bottom-left quadrants, this will suggest that the distribution is substantially unimodal.  
In addition, an upward line whose slope is given by the ratio between the $50$th percentile of the local depth and the $50$-th percentile of the global depth is plotted. In case of strong unimodality and for a ``medium'' value of $\mathit{\delta}$, such line becomes a 45-degree line and points will lie on it.

To illustrate, data according to different cases were simulated and the corresponding GLD-plots were depicted. Only results coming from the adoption of $D_{\cos}$ and $LD_{\cos}$ are reported. Very similar results were obtained with $D_{\rm arc}$ and $LD_{\rm arc}$, and $D_{\rm chord}$ and $LD_{\rm chord}$ for the here considered setups. However, it is worth underlying that different notions may produce slightly different views especially in case of non unimodal distributions.

In order to explore the data, the parameter $\mathit{\delta}_{\cos}$ was set equal to $0.29$, $0.5$ and $1$. Such values were chosen in order to consider a neighborhood of a given point $\mathit{x}$ such that the angle between $\mathit{x}$ and any other point on the boundary of the neighborhood was equal to $45^{\circ}$, $60^{\circ}$ and $90^{\circ}$, respectively.
%}

First, for $q = 5$ two unimodal samples were generated according to a von Mises-Fisher distribution with location parameter $\mu=(1,0,0,0,0)$ and concentration parameter $\kappa=5$ and $20$. The sample size was set equal to $250$. The corresponding GLD-plots are depicted in Figures \ref{firstgld} and \ref{secondgld}.

When the data are less concentrated around the mean direction (i.e. $\kappa = 5$), the GLD-plots show many points on the upper-right corner, indicating the $50\%$ of the highest values of both local and global depths are reached approximately in the same area of the sample space. When $\mathit{\delta}_{\cos} = 1$, one can see that points do not deviate too much from the straight line, while a greater difference can be seen for $\mathit{\delta}_{\cos} = 0.5$ even though unimodality seems still holding. When an even smaller value of $\mathit{\delta}_{\cos}$, i.e. $0.29$, the entire data cloud lies at the bottom of the plot because of smaller values of the local depth. Here, despite some more points fall in the upper-left and bottom-right quadrants with respect to the other two GLD-plots, the majority of the data lie again within the upper-right and bottom-left quadrants.

\begin{figure}[ht]%
 \centering
 \subfloat[]{\includegraphics[width=0.40\textwidth]{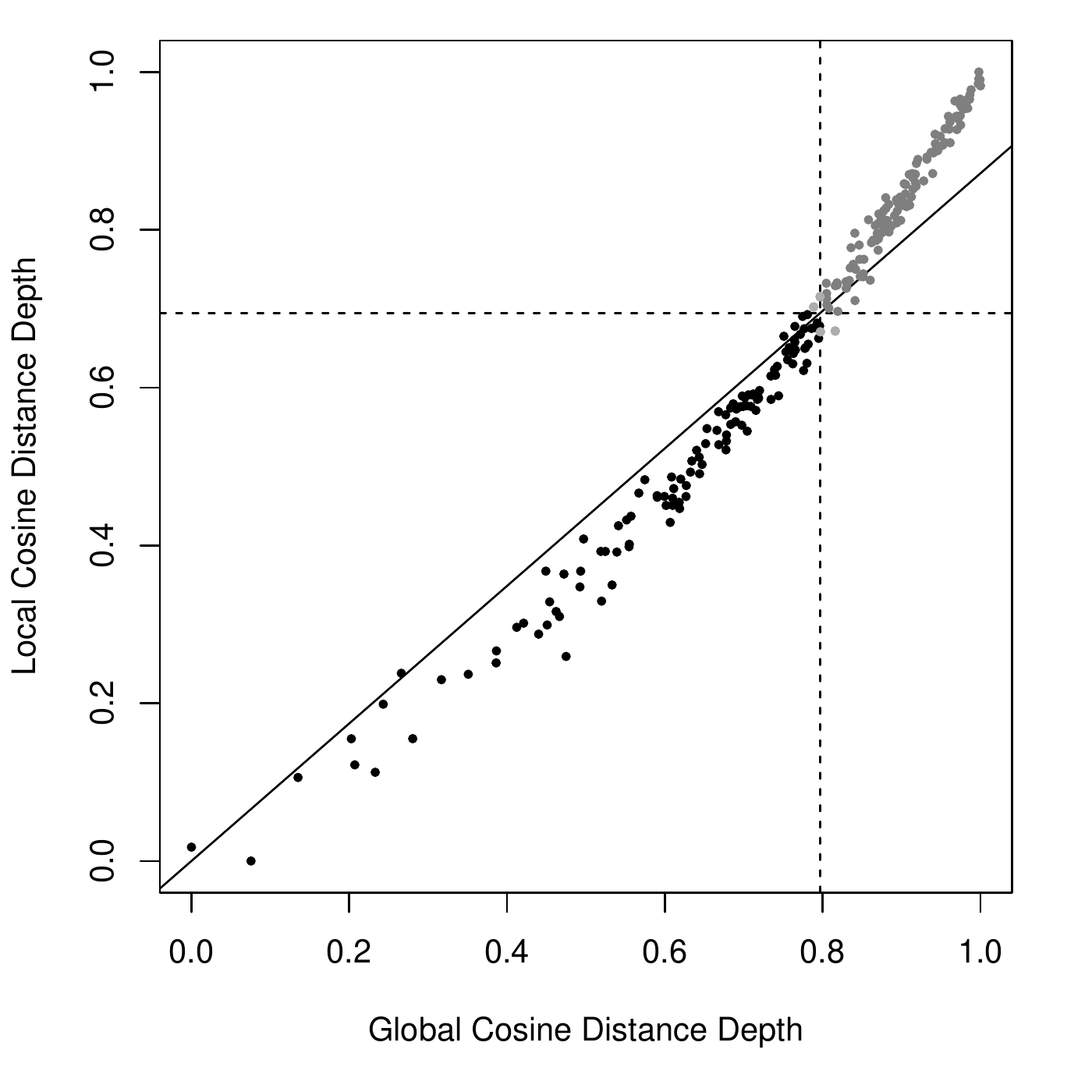}\label{fig:a}}%
 \subfloat[]{\includegraphics[width=0.40\textwidth]{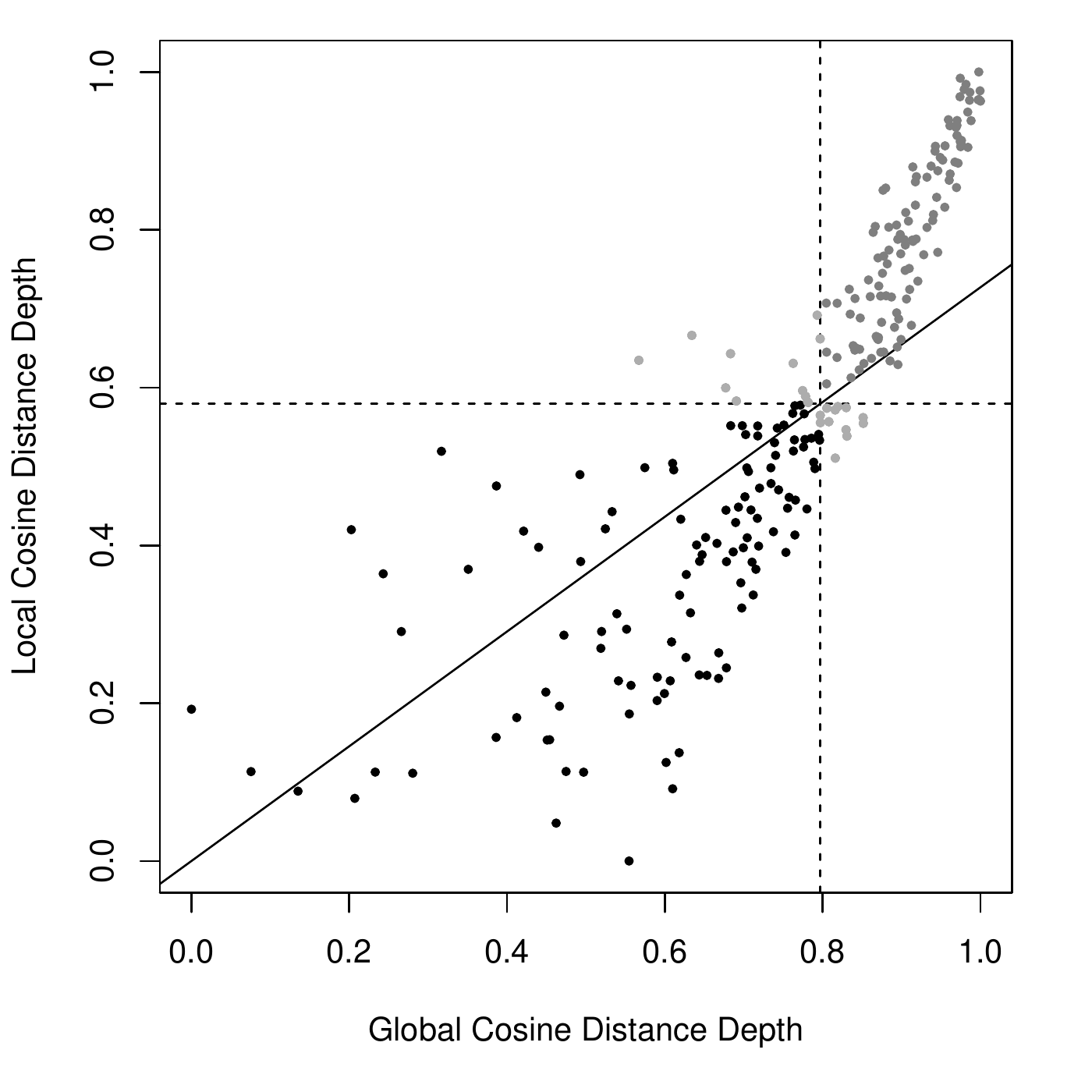}\label{fig:b}}\\
 \subfloat[]{\includegraphics[width=0.40\textwidth]{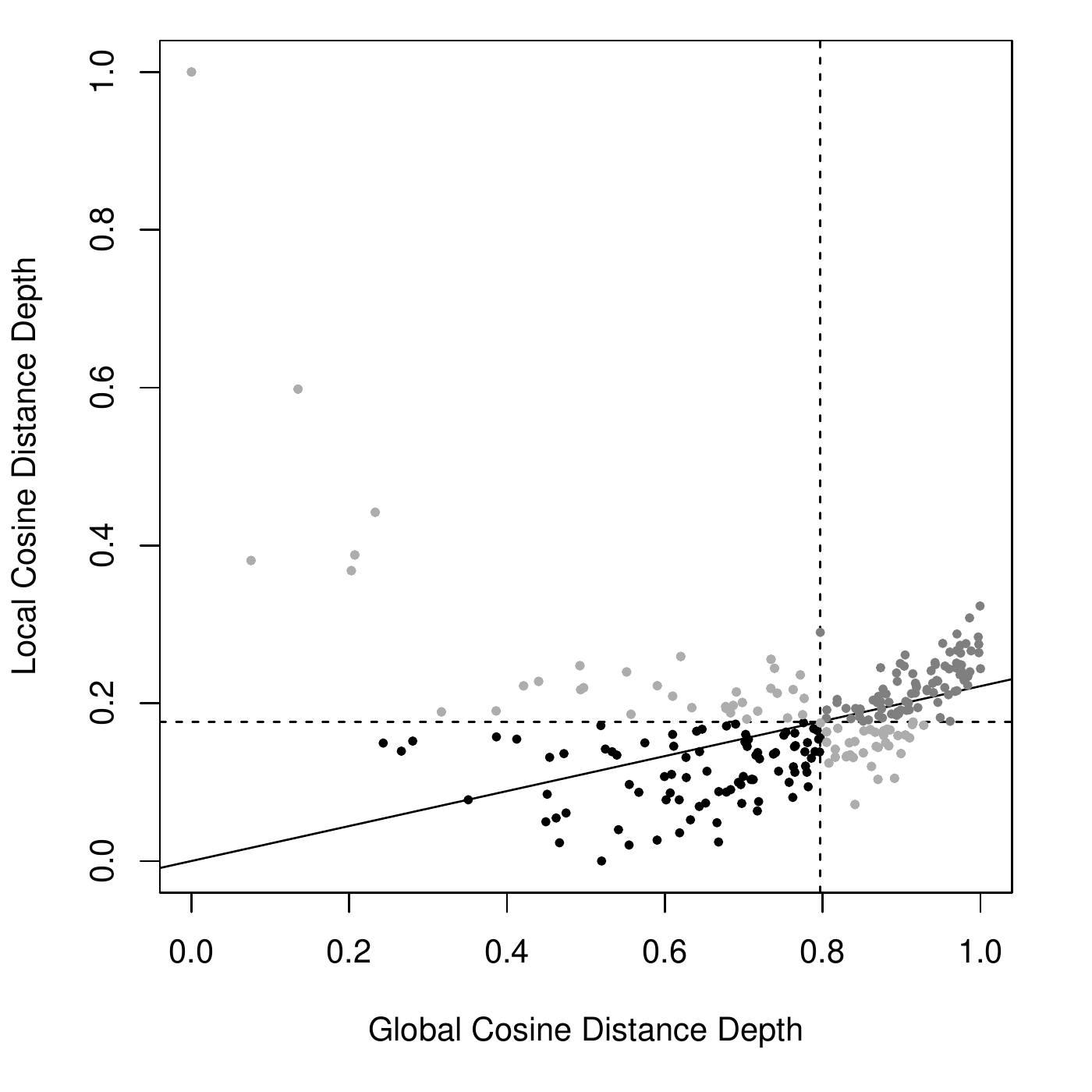}\label{fig:c}}%
 \caption{GLD-plots of a von Mises-Fisher distribution in 5 dimensions with concentration parameter $\kappa = 5$ using the normalized global and local cosine depths. $\mathit{\delta}_{\cos}$ is set equal to $1$ (a), $0.5$ (b) and $0.29$ (c).}%
 \label{firstgld}%
\end{figure}

In the case of more concentrated data (i.e. $\kappa = 20$), the corresponding GLD-plots clearly indicate unimodality for each considered value of $\mathit{\delta}_{\cos}$. Note that all the data points are on the 45-degree line for $\mathit{\delta}_{\cos} = 1$.
\begin{figure}[ht]%
 \centering
 \subfloat[]{\includegraphics[width=0.40\textwidth]{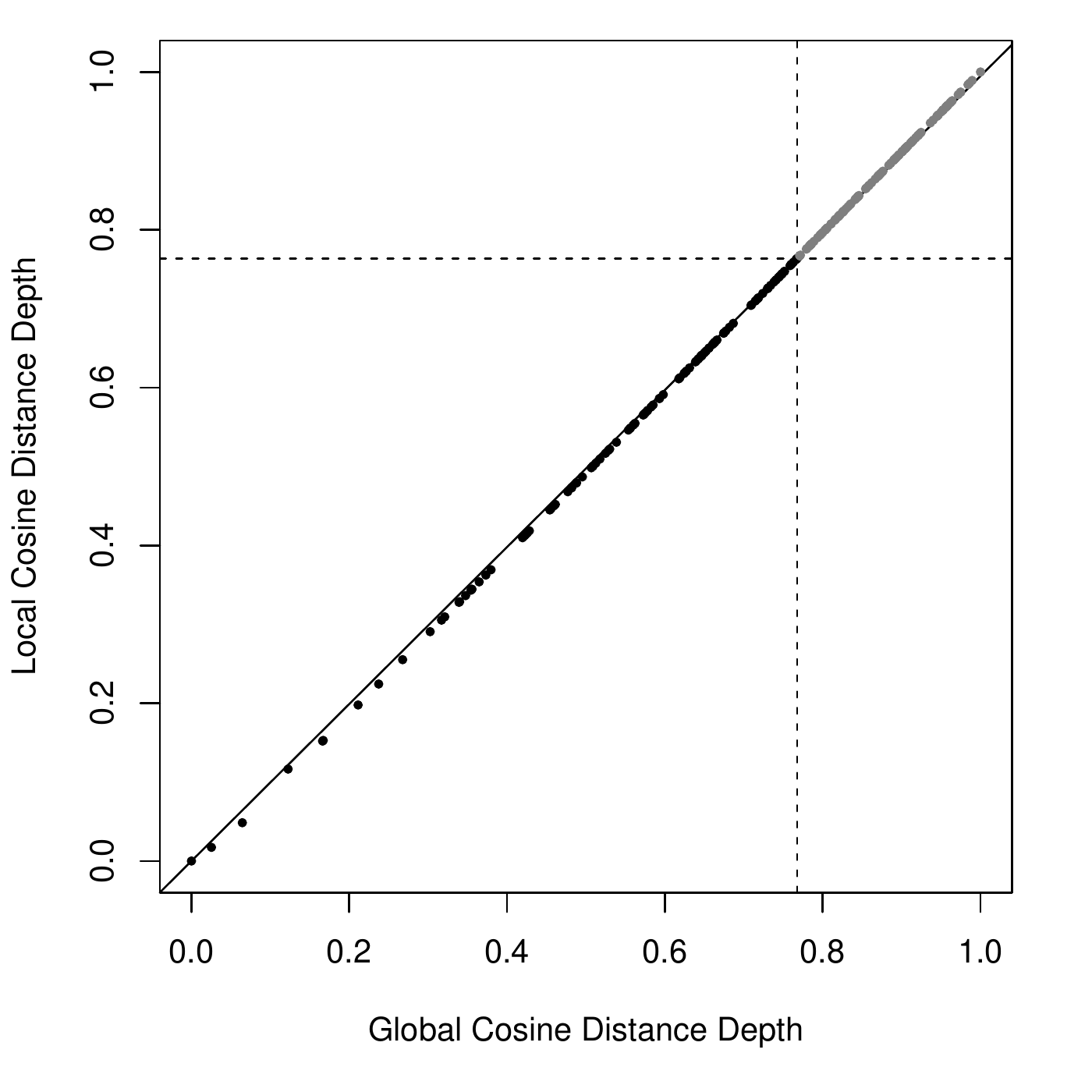}\label{fig:a}}%
 \subfloat[]{\includegraphics[width=0.40\textwidth]{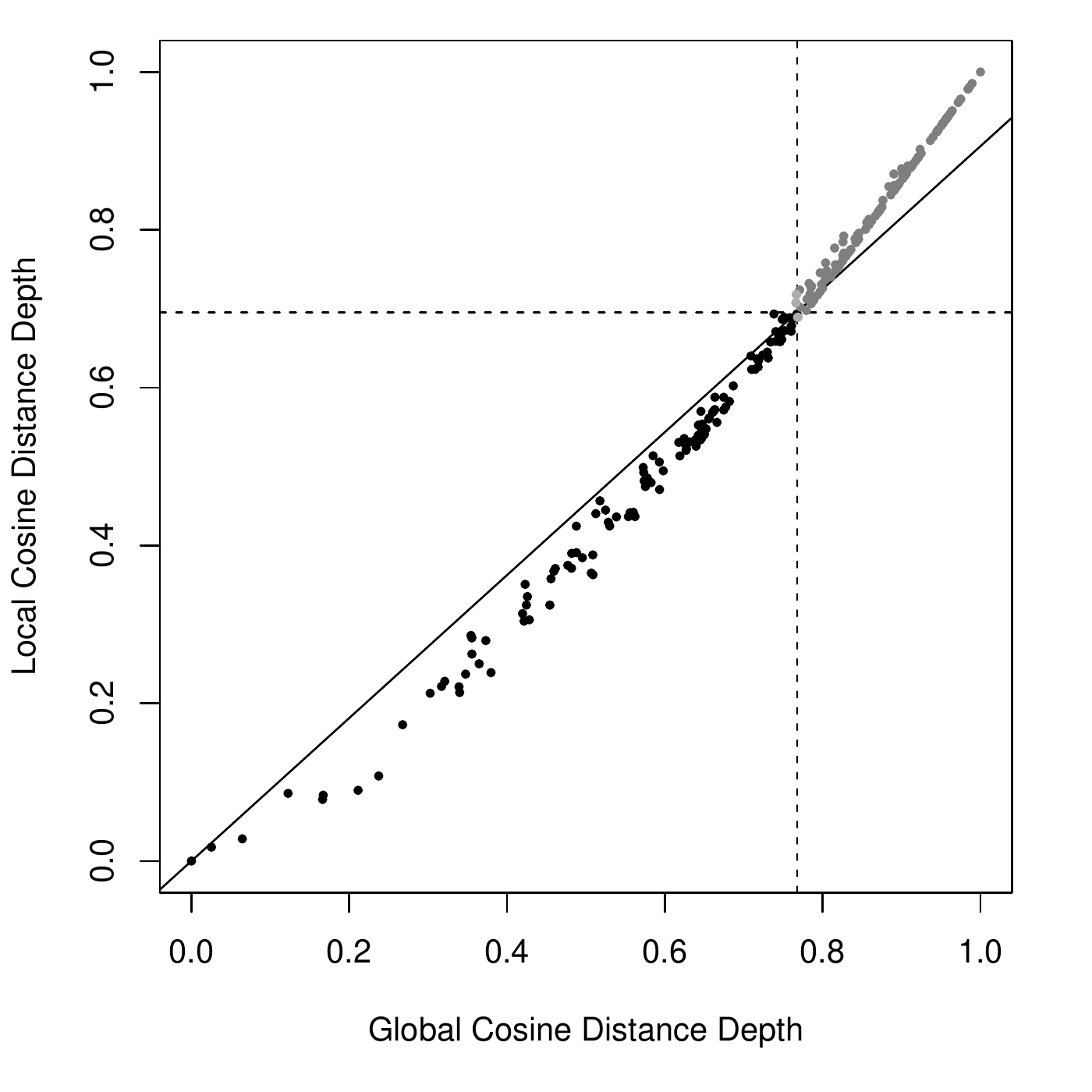}\label{fig:b}}\\
 \subfloat[]{\includegraphics[width=0.40\textwidth]{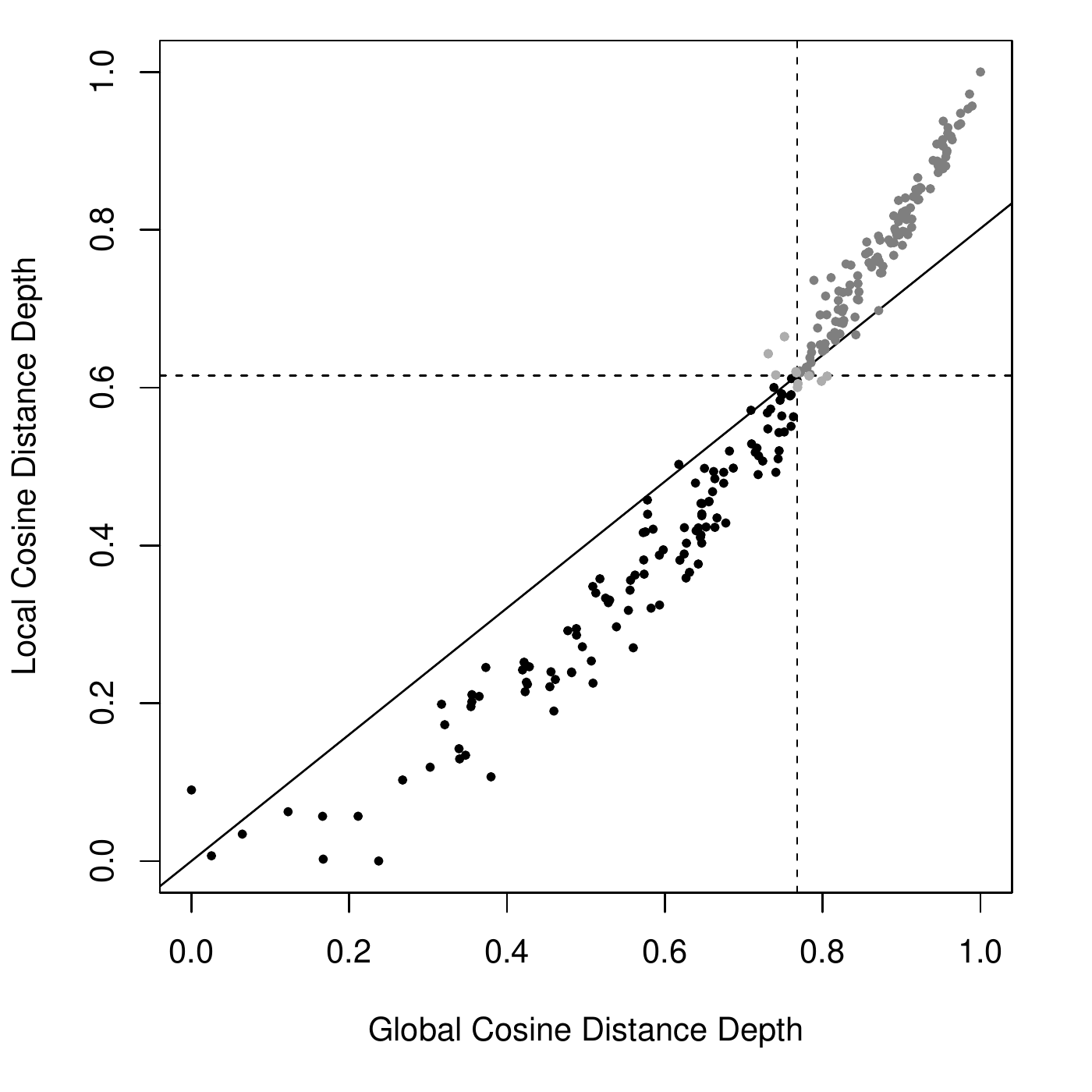}\label{fig:c}}%
 \caption{GLD-plots of a von Mises-Fisher distribution in 5 dimensions with concentration parameter $\kappa = 20$ using the normalized global and local cosine depths. $\mathit{\delta}_{\cos}$ is set equal to $1$ (a), $0.5$ (b) and $0.29$ (c).}%
 \label{secondgld}%
\end{figure}

Then, two bimodal samples were generated according to a mixture of von Mises-Fisher distributions in 5 dimensions with $\mu_1 = (1,0,0,0,0)$ and $\mu_2 = (0,0,0,0,1)$ and concentration parameters $\kappa_{1}=\kappa_{2}=20$. The first was an equally-weighted mixture, while the second had $80 \%$ of the weight on the first component.  
That is, the first distribution has two modes of the same height, the second has a main mode and a well separated minor mode elsewhere on the hypersphere. The corresponding GLD-plots are reported in Figures \ref{thirdgld} and \ref{fourthgld}.
In both cases, points are scattered around, and the $50\%$ deepest sample points according to the global and local depth functions do not clearly lie on the upper-right quadrant, signaling a departure from unimodality for each considered value of $\mathit{\delta}_{\cos}$. It is worth noticing that data points on the GLD-plots depicted in Figure \ref{fourthgld} appear to be U-shaped. It may be due to a great difference between local and global depth rankings with regard to the main mode on the distribution.

\begin{figure}[ht]%
 \centering
 \subfloat[]{\includegraphics[width=0.40\textwidth]{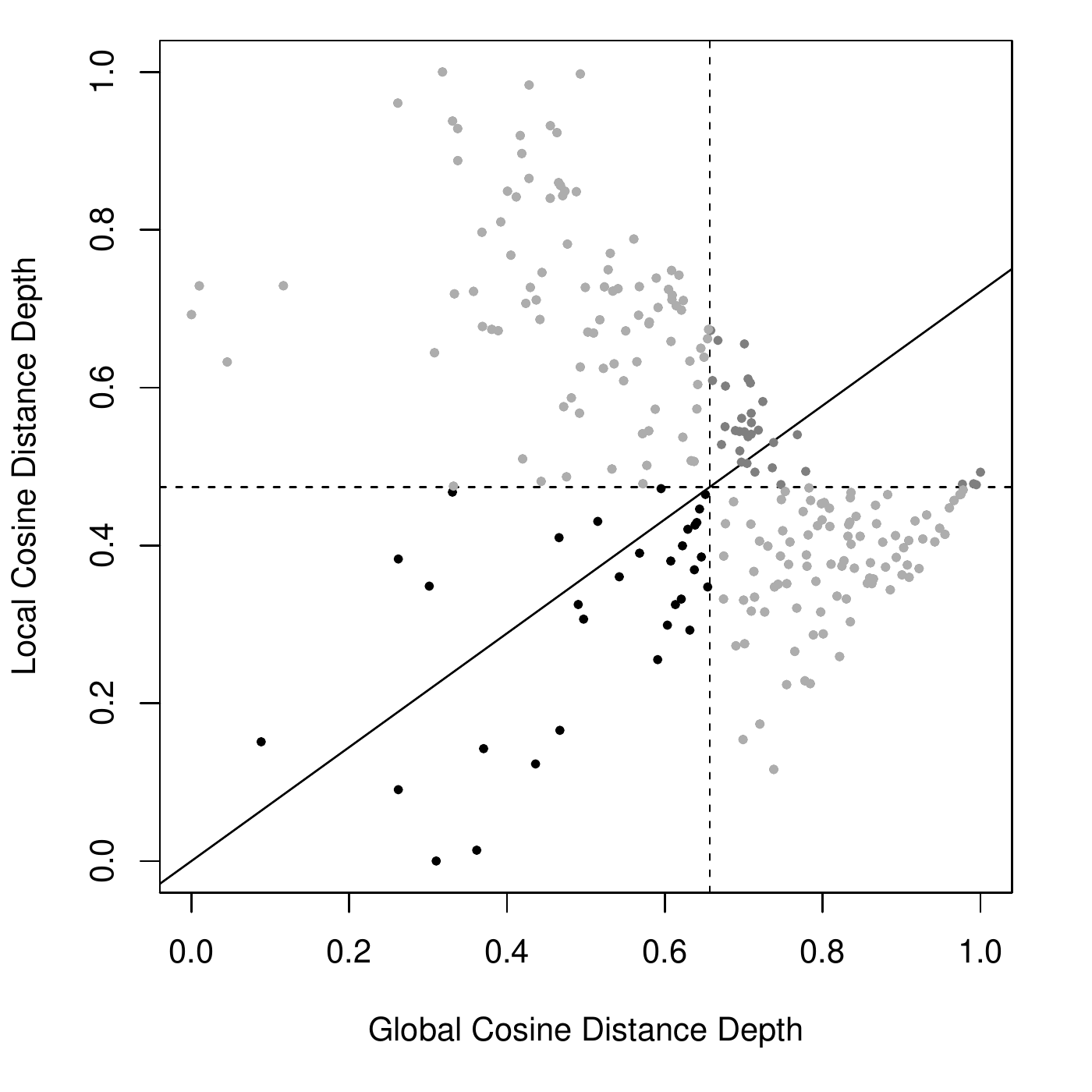}\label{fig:a}}%
 \subfloat[]{\includegraphics[width=0.40\textwidth]{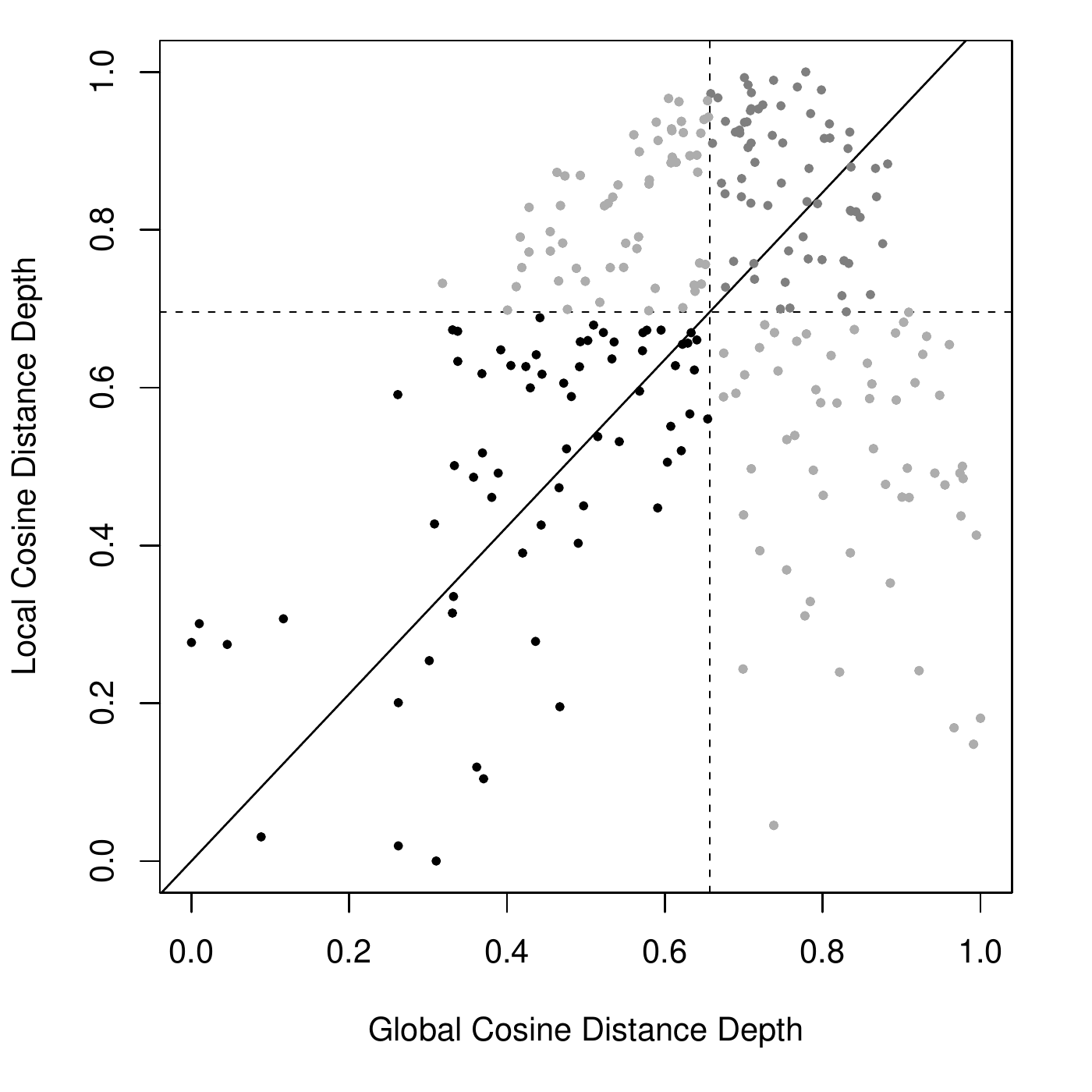}\label{fig:b}}\\
 \subfloat[]{\includegraphics[width=0.40\textwidth]{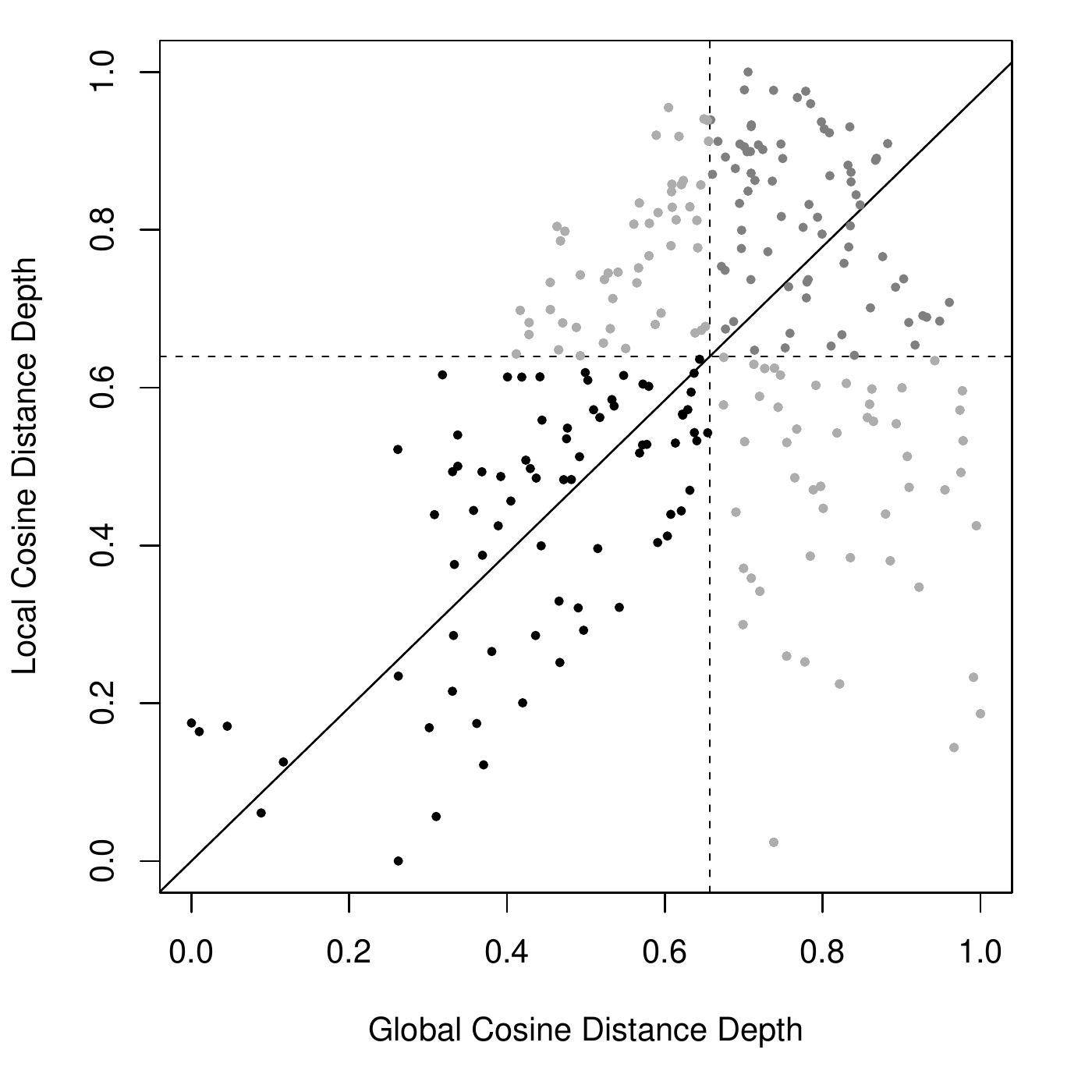}\label{fig:c}}%
 \caption{GLD-plots of a bimodal sample generated from an equally-weighted von Mises-Fisher distribution in 5 dimensions using the normalized global and local cosine depths. $\mathit{\delta}_{\cos}$ is set equal to $1$ (a), $0.5$ (b) and $0.29$ (c).}%
 \label{thirdgld}%
\end{figure}

\begin{figure}[ht]%
 \centering
 \subfloat[]{\includegraphics[width=0.40\textwidth]{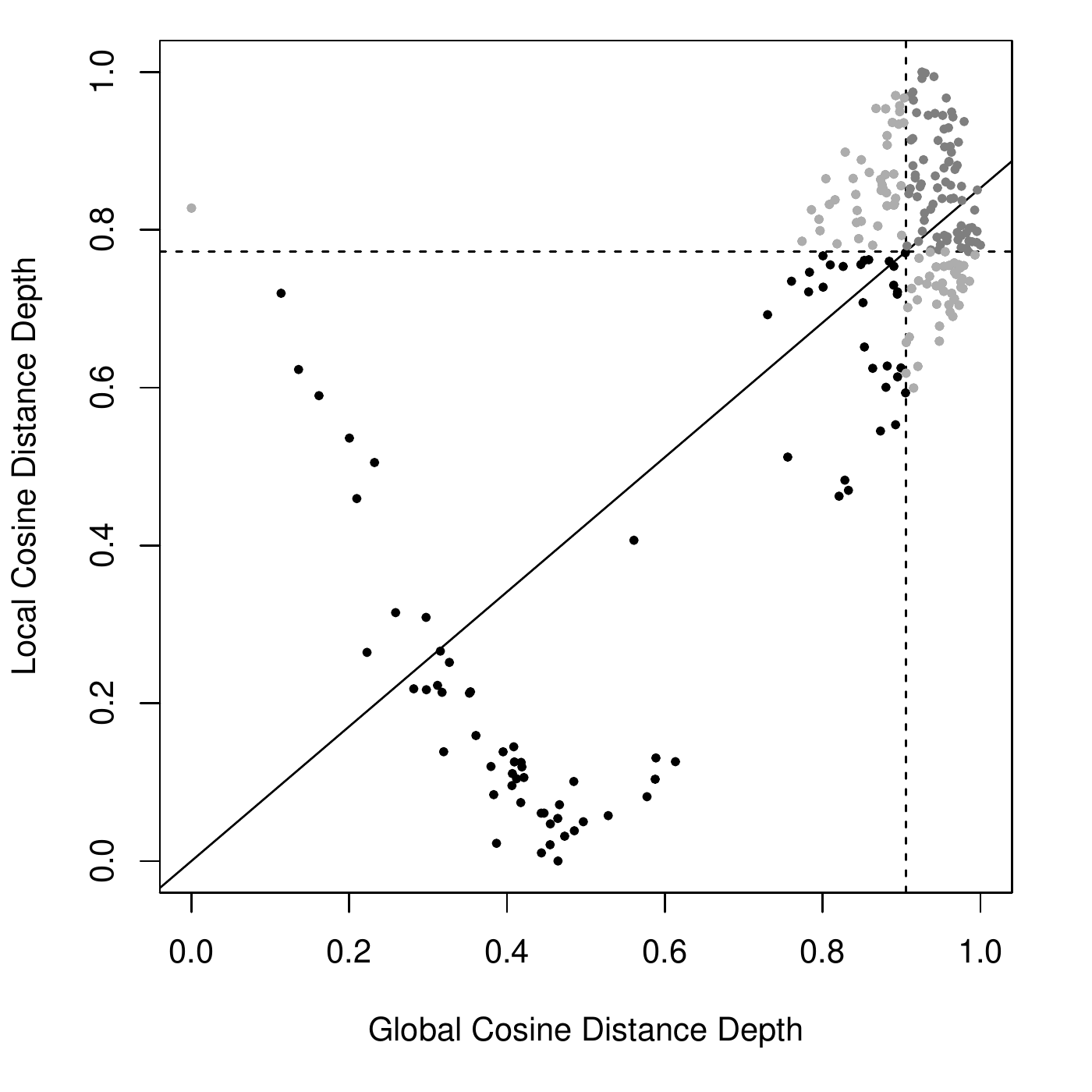}\label{fig:a}}%
 \subfloat[]{\includegraphics[width=0.40\textwidth]{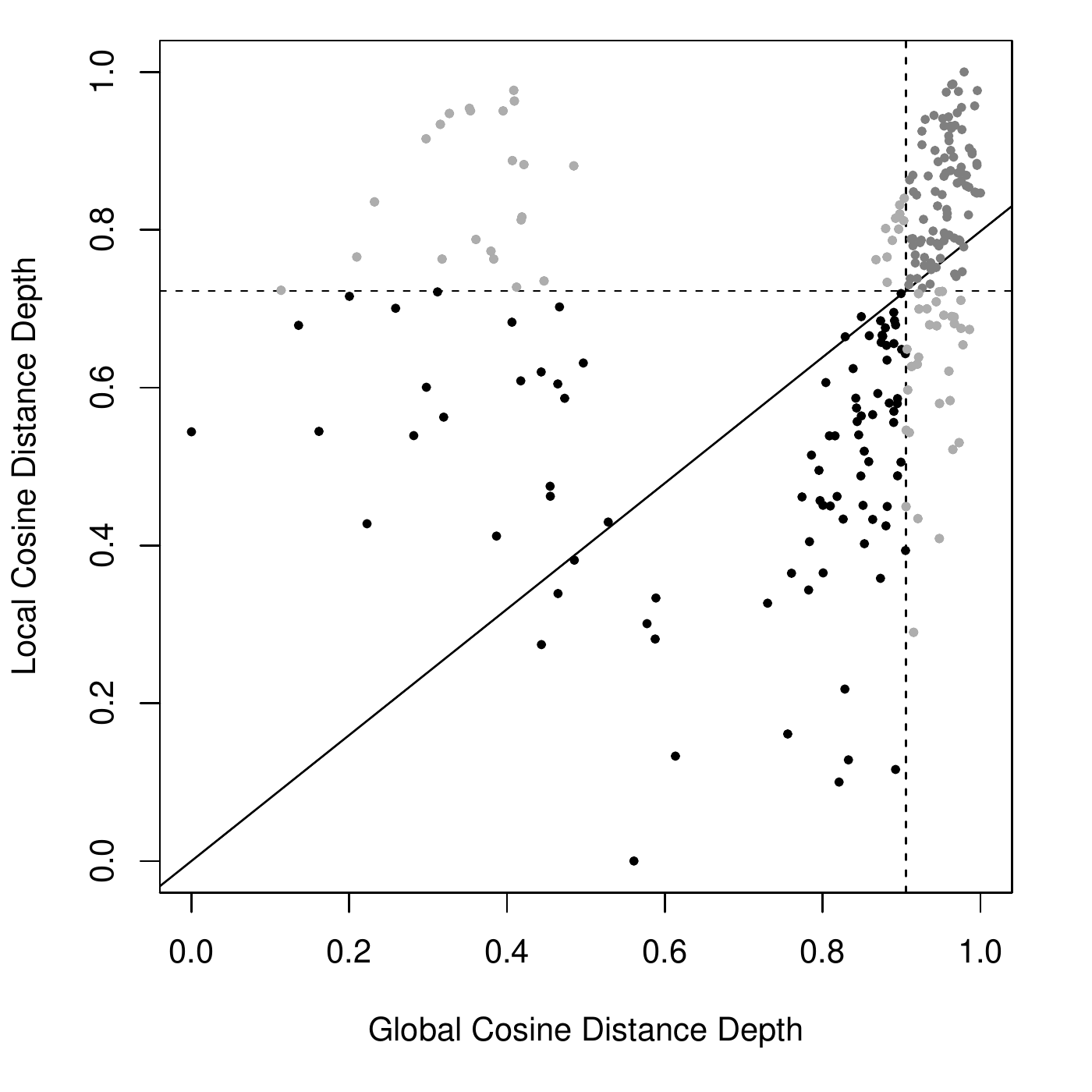}\label{fig:b}}\\
 \subfloat[]{\includegraphics[width=0.40\textwidth]{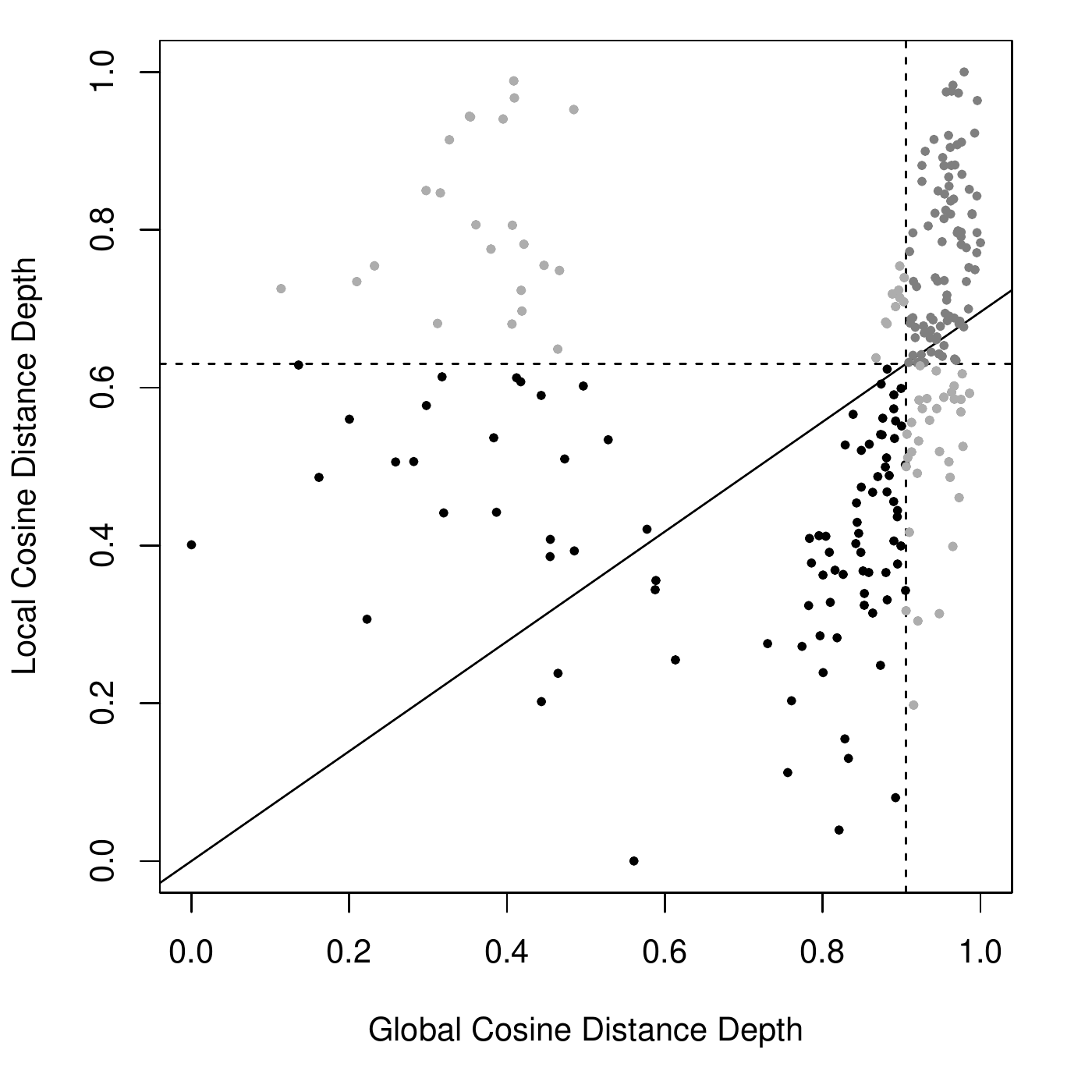}\label{fig:c}}%
 \caption{GLD-plots of a bimodal sample generated from a mixture of von Mises-Fisher distributions $80 \%$ of the weight on the first component in 5 dimensions using the normalized global and local cosine depths. $\mathit{\delta}_{\cos}$ is set equal to $1$ (a), $0.5$ (b) and $0.29$ (c).}%
 \label{fourthgld}%
\end{figure}

\section{Real data examples}
\label{sec:RealDataExample}
 
In the following two real data sets on $S^{2}$ are considered, so that they can be also viewed in their original sample space to facilitate the evaluation of the proposed tool.

\subsection{Directions of flights}

The sample refers to the $576$ flights departed from Brussels National Airport on March 3, 2014. Data (expressed in latitude and longitude coordinates) are centered and their contour plot is reported in Figure \ref{gldflights} (a). One can see such data show a major mode and two minor clusters. The corresponding GLD-plots for $\mathit{\delta}_{\cos} = 0.29$, $0.5$ and $1$ are depicted Figure \ref{gldflights} (b), (c) and (d), respectively. In all such plots, the data points in the upper-right quadrant suggest that the sample has a main mode, while the presence of some  irregularities on the left side of the plot, especially for $\mathit{\delta}_{\cos} = 0.29$ and $0.5$, indicates that data someway deviates from unimodality. 

\begin{figure}[ht]%
 \centering
 \subfloat[]{\includegraphics[width=0.40\textwidth]{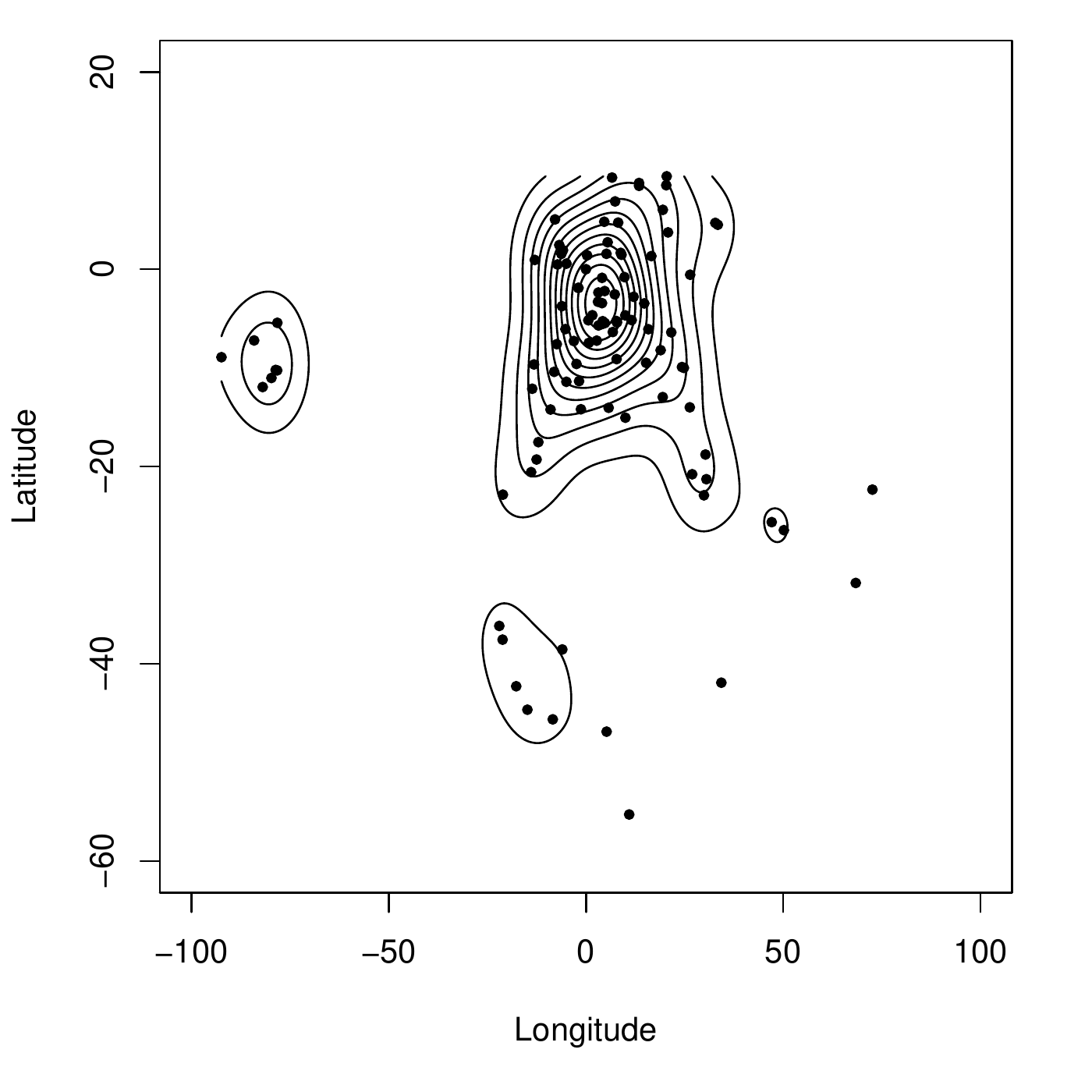}\label{fig:a}}%
 \subfloat[]{\includegraphics[width=0.40\textwidth]{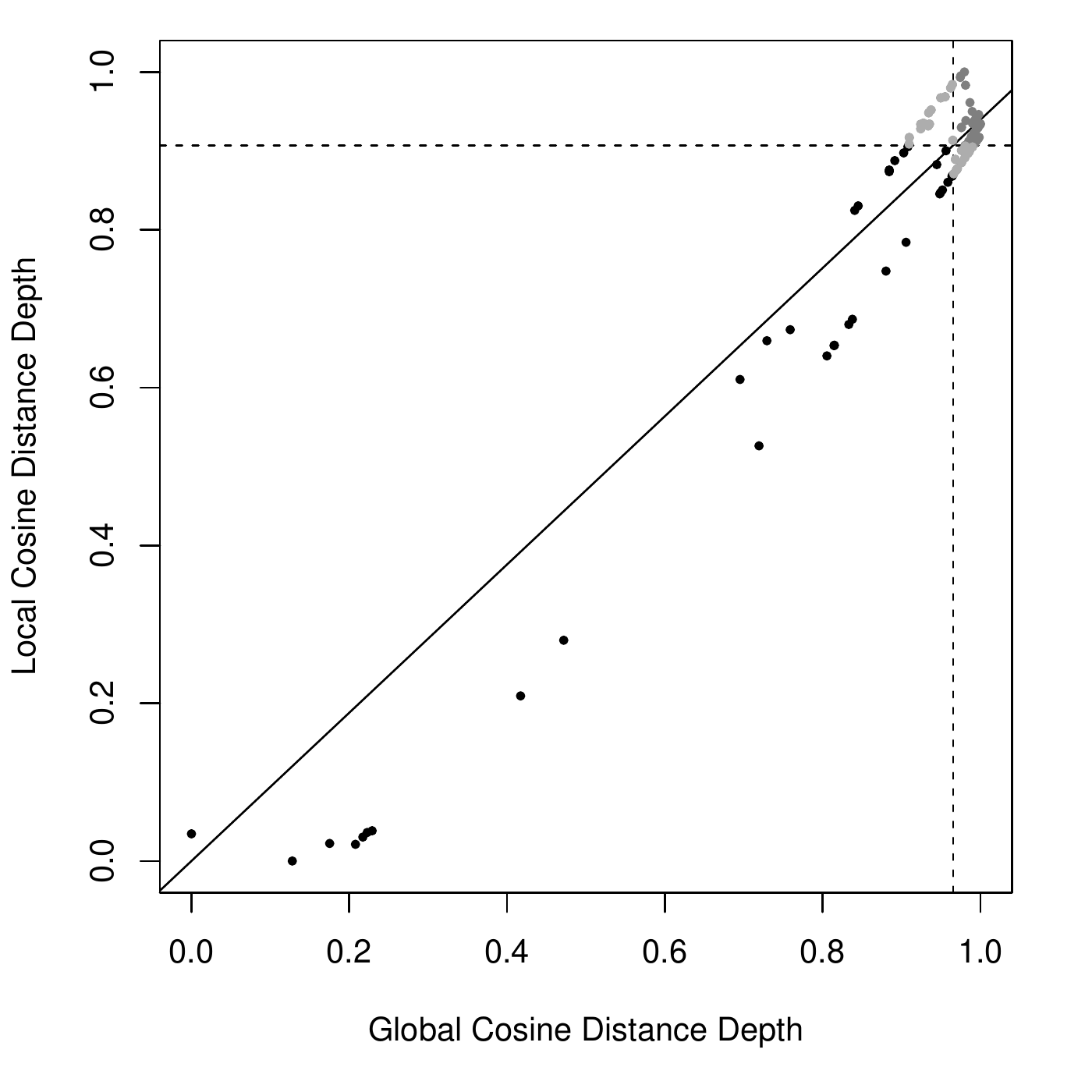}\label{fig:b}}\\
 \subfloat[]{\includegraphics[width=0.40\textwidth]{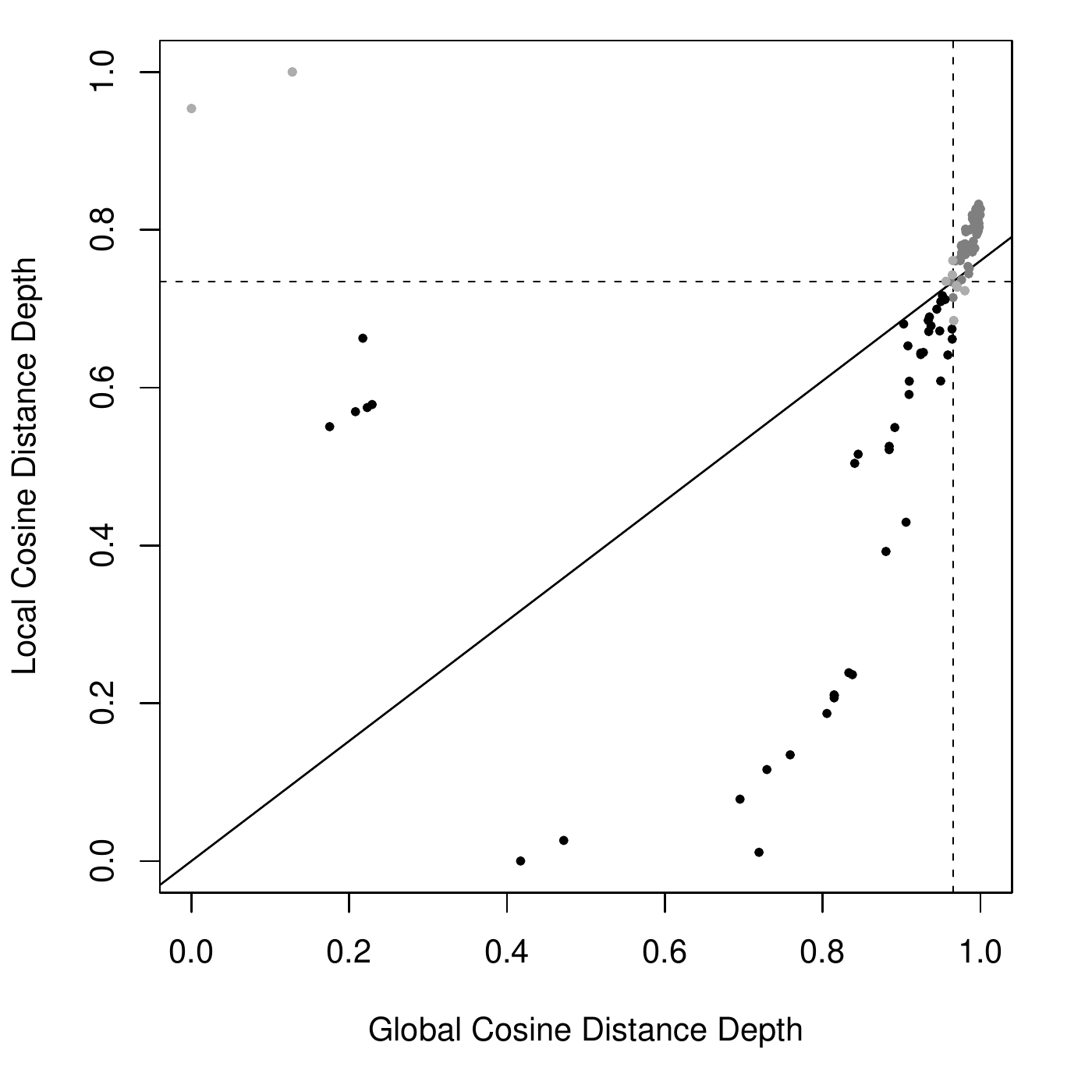}\label{fig:c}}%
 \subfloat[]{\includegraphics[width=0.40\textwidth]{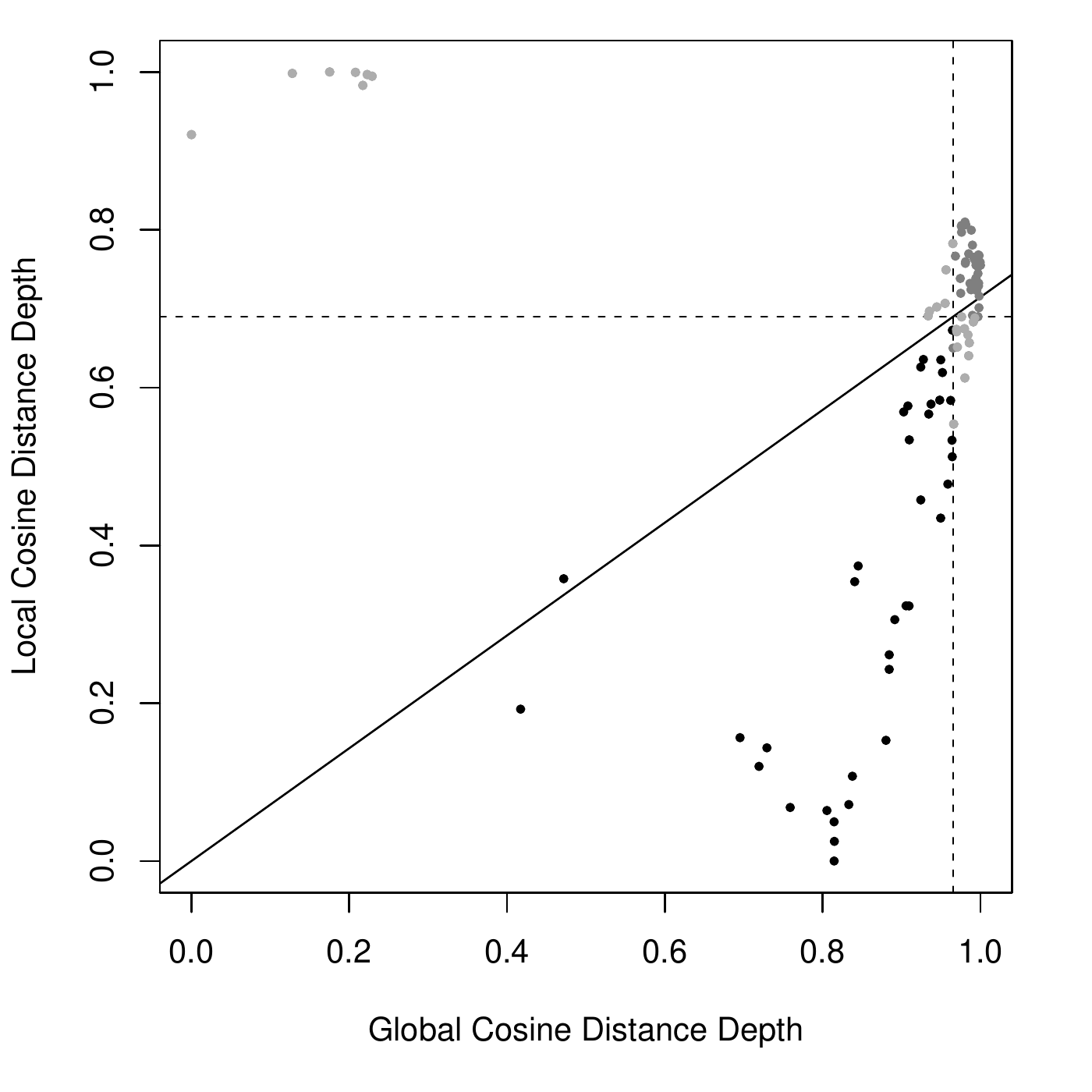}\label{fig:c}}%
 \caption{Contour plot of directions of the $576$ flights departing from Brussels National Airport on March 3, 2014 (a), and the corresponding GLD-plots. $\mathit{\delta}_{\cos}$ is set equal to $1$ (b), $0.5$ (c) and $0.29$ (d).}%
 \label{gldflights}%
\end{figure}

\subsection{Magnetic remanence measurements}
 
Data concern $107$ measurements of magnetic remanence in specimens of Precambrian volcanics. Indeed, magnetic field and magnetization directions can be treated as unit vectors anchored at the center of a unit sphere (see \citealp{schmidt1985} and \citealp{fisher1993statistical}). The contour plot of the data (expressed in latitude and longitude coordinates), reported in Figure \ref{mag} (a), clearly indicates two main modes. Also here the GLD-plots (for all the considered values of $\mathit{\delta}_{\cos}$) suggest a deviation from unimodality, with the data points which appear to be U-shaped as it occurred for simulated bimodal data presented in Section \ref{sec:exploratory}. 

\begin{figure}[ht]%
 \centering
 \subfloat[]{\includegraphics[width=0.40\textwidth]{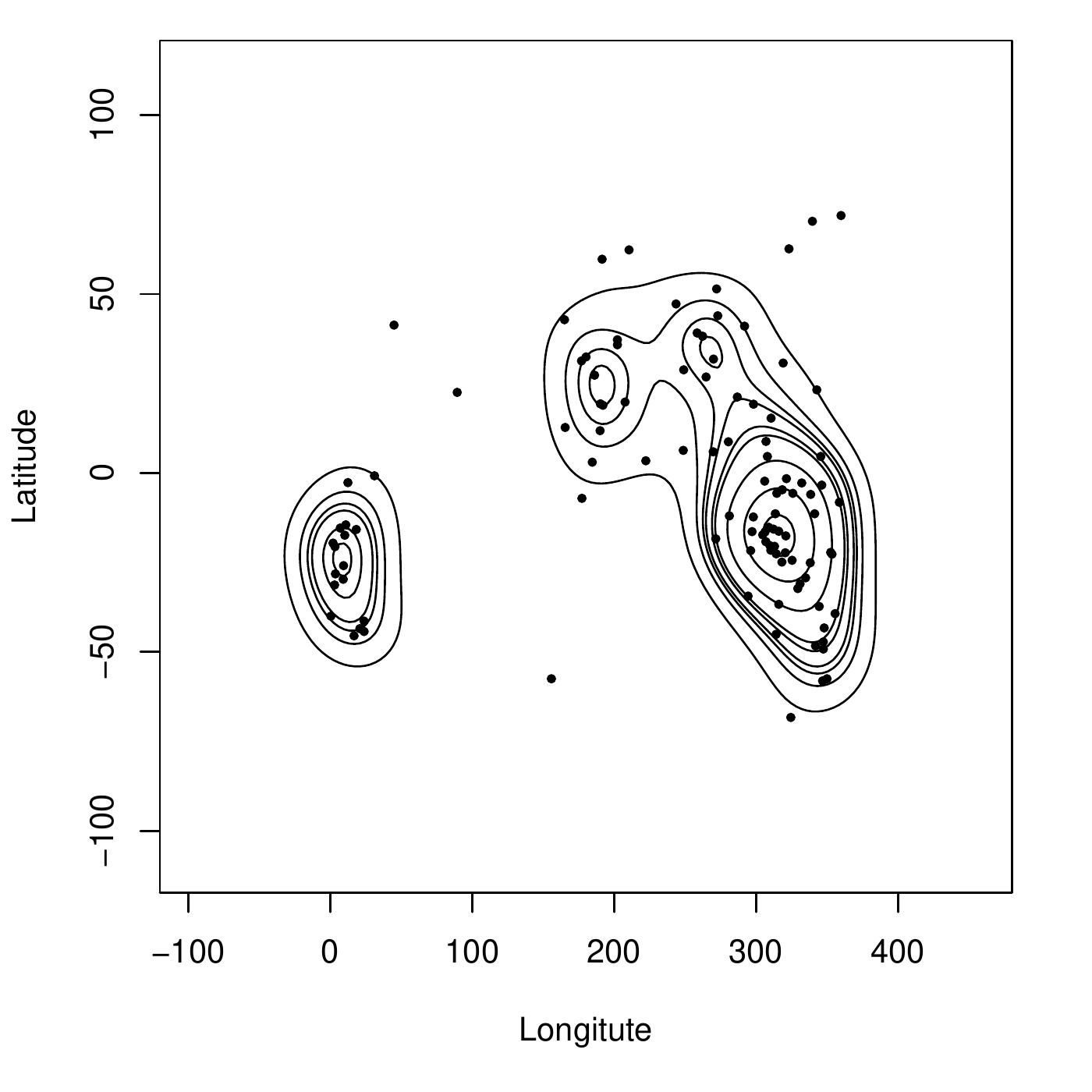}\label{fig:a}}%
 \subfloat[]{\includegraphics[width=0.40\textwidth]{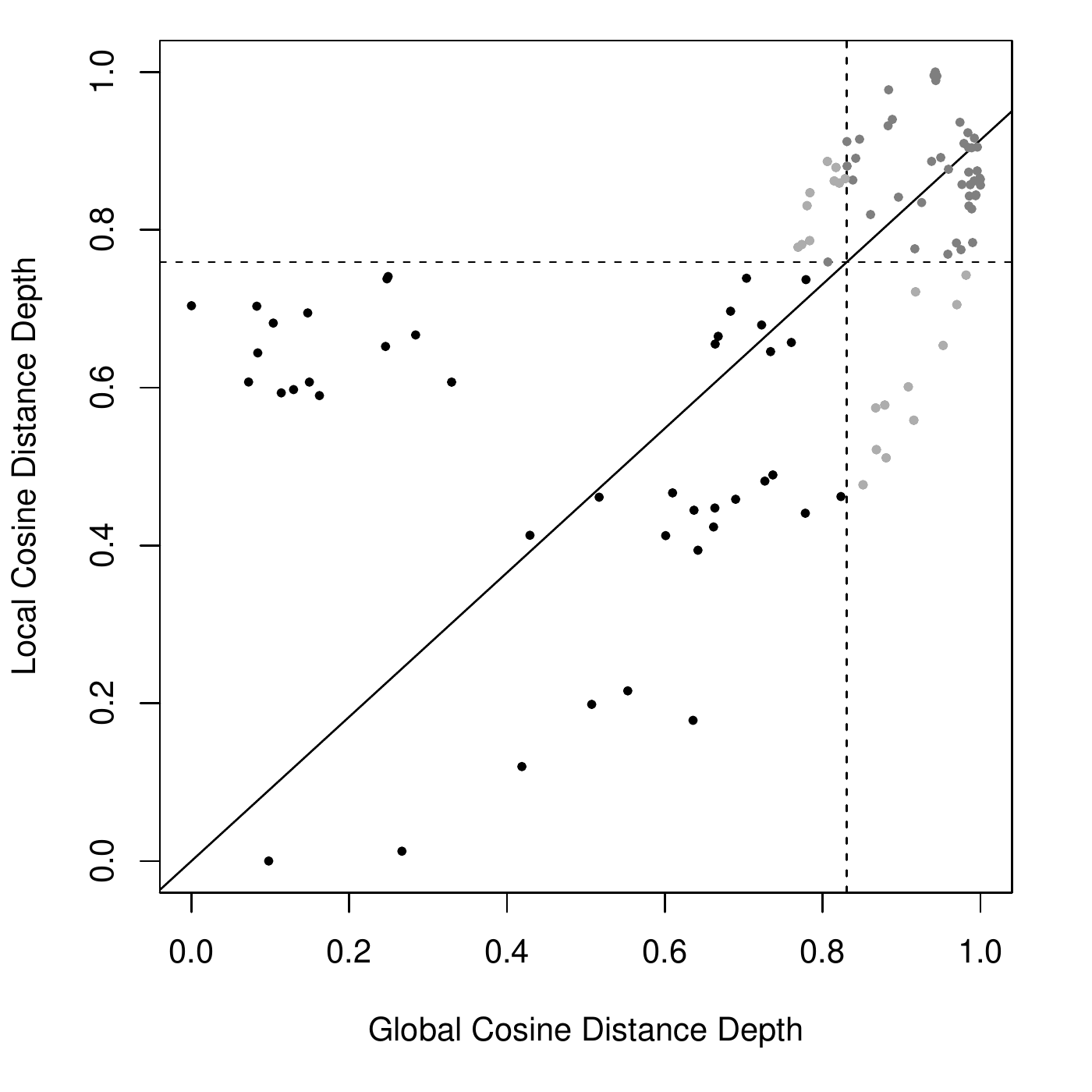}\label{fig:b}}\\
 \subfloat[]{\includegraphics[width=0.40\textwidth]{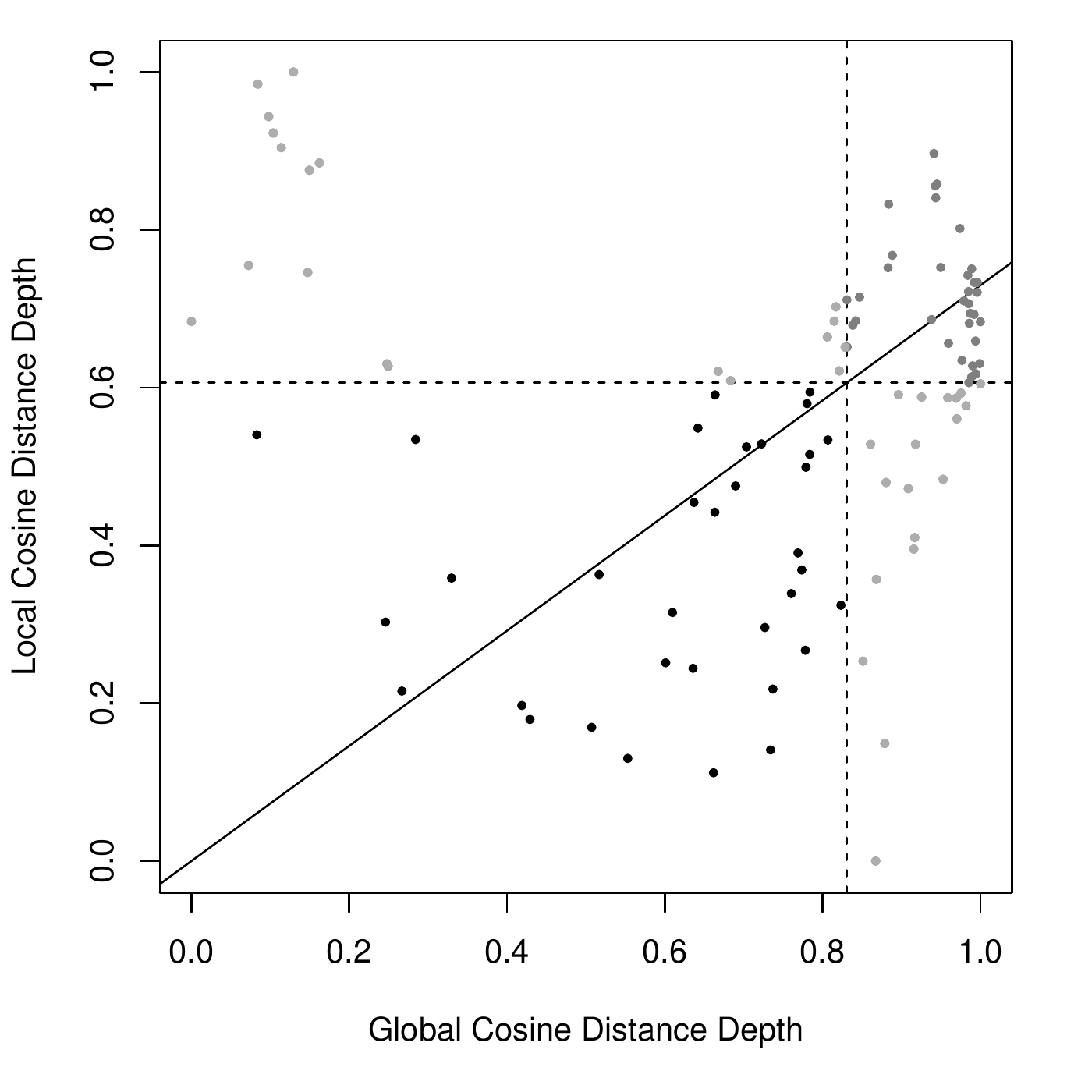}\label{fig:c}}%
 \subfloat[]{\includegraphics[width=0.40\textwidth]{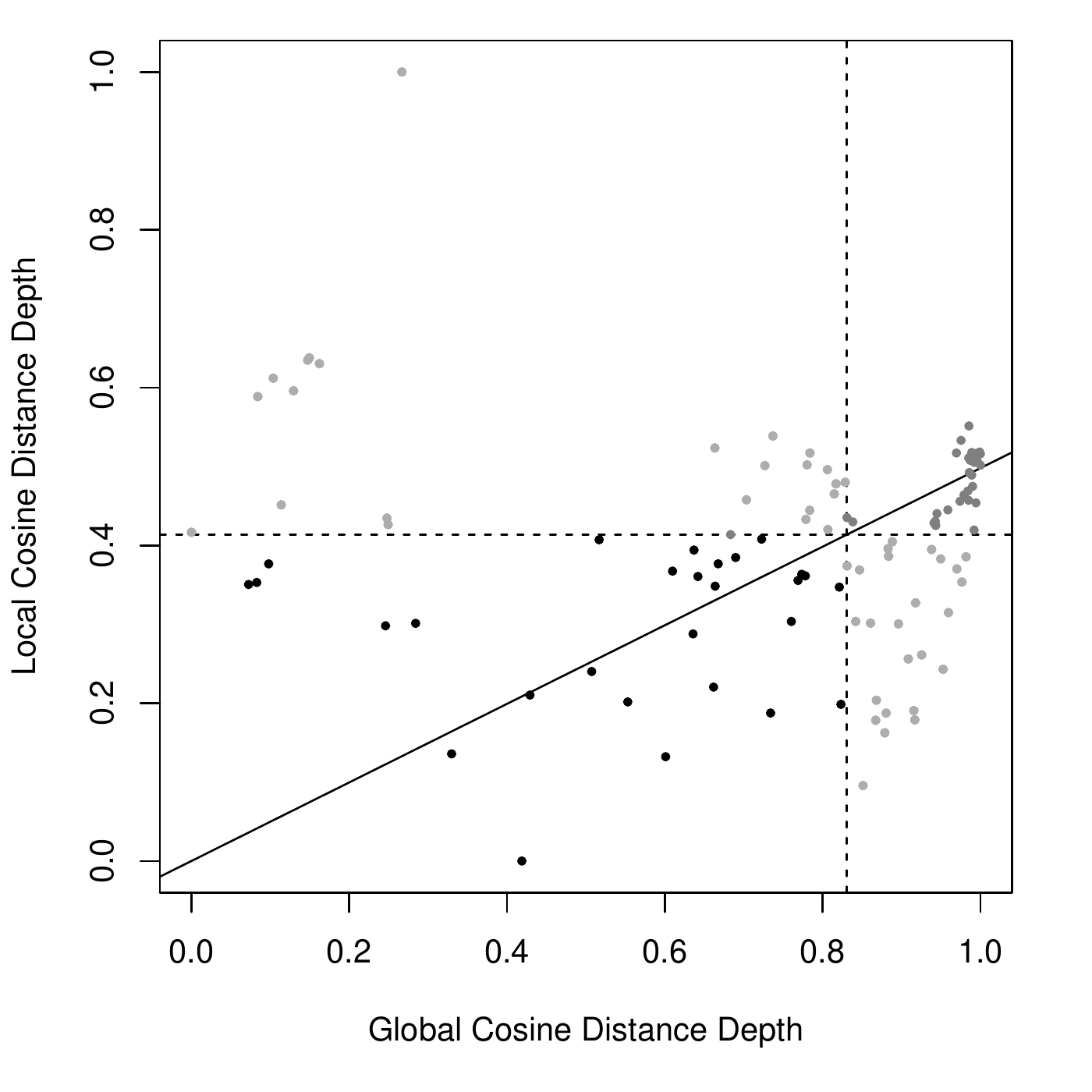}\label{fig:c}}%
 \caption{Contour plot of 107 measurements of magnetic remanence in specimens of Precambrian volcanics (a), and the corresponding GLD-plots. $\mathit{\delta}_{\cos}$ is set equal to $1$ (b), $0.5$ (c) and $0.29$ (d).}%
 \label{mag}%
\end{figure}

\section{Concluding remarks}
\label{sec:conc}

Depth function is a useful tool in nonparametric statistics which is aimed at providing an important contribution to directional data analysis too. In this work, local distance-based depth functions for directional data are presented and exploited in order to propose a graphical tool,  the Global vs Local depth plot (GLD-plot), for investigating unimodality on high dimensional spherical spaces. The proposal allows for the comparison between local and global depth-induced rankings for a given sample by means of an easy interpretable two-dimensional scatterplot. The proposed tool was evaluated through simulated examples and two real data sets, and it seems to merit particular interest. The GLD-plot does not aim to be an inference method for making automated decisions about a distribution. Rather, it is to be considered a useful tool to support the data analyst. 

\clearpage

\bibliographystyle{apalike}
\bibliography{references}

\end{document}